\newcommand{\comment}[1]{}
\newtheorem{theorem}{Theorem}
\newtheorem{cor}[theorem]{Corollary}
\newtheorem{fact}[theorem]{Fact}
\newtheorem{lemma}[theorem]{Lemma}
\newtheorem{definition}[theorem]{Definition}
\begin{document}
\title{Parallelized approximation algorithms for minimum routing cost spanning
trees}
\author{Ching-Lueh Chang\thanks{Department of Computer Science and
Information Engineering, National Taiwan University, Taipei, Taiwan. Email:
Email: d95007@csie.ntu.edu.tw}
\and Yuh-Dauh Lyuu\thanks{Department of Computer Science and Information
Engineering, National Taiwan University, Taipei, Taiwan. Email:
lyuu@csie.ntu.edu.tw}}
\maketitle

\begin{abstract}
Let $G=(V,E)$ be an undirected graph with a nonnegative edge-weight function
$w.$
The routing cost of a spanning tree $T$ of $G$ is
$\sum_{u,v\in V} d_T(u,v),$
where $d_T(u,v)$ denotes the weight of the simple $u$-$v$ path in $T.$
The {\sc Minimum Routing Cost Spanning Tree} (MRCT) problem \cite{WLBCRT00} asks for a spanning tree of $G$ with the
minimum routing cost.
In this paper, we parallelize several previously proposed approximation
algorithms for the MRCT problem and some of its variants.
Let $\epsilon>0$ be an arbitrary constant.
When the edge-weight function $w$ is given in unary,
we parallelize the $(4/3+\epsilon)$-approximation algorithm
for the MRCT problem \cite{WCT00MRCT} by implementing it using an ${\cal
RNC}$ circuit.
\comment{
When $G$ is a complete graph and $w$ obeys the triangle inequality, we
parallelize the $(1+\epsilon)$-approximation algorithm due to \cite{WLBCRT00}, again into
${\cal RNC}$ circuits.
}
There are other variants of the MRCT problem.
In the {\sc Sum-Requirement Optimal Communication Spanning Tree} (SROCT) problem
\cite{WCT00PROCTSROCT}, each vertex $u$ is associated with a requirement $r(u)\ge 0.$
The objective is to find a spanning tree $T$ of $G$
minimizing $\sum_{u,v\in V} \left(r(u)+r(v)\right) \, d_T(u,v).$
\comment{
The {\sc Product-Requirement Optimal Communication Spanning Tree} (PROCT)
\cite{WCT00PROCTSROCT} problem is similar except that the objective is to minimize
$\sum_{u,v\in V} r(u)\, r(v) \, d_T(u,v).$
}
When the edge-weight function $w$ and the vertex-requirement function $r$
are given in unary,
we parallelize the $2$-approximation algorithm for the SROCT problem
\cite{WCT00PROCTSROCT} by realizing it using
${\cal RNC}$ circuits, with a slight degradation in the approximation ratio
from $2$ to $2+o(1).$
\comment{
When
$w$ and $r$ are integer-valued,
bounded by a polynomial in $|V|$ and $w$
obeys the triangle inequality,
we parallelize the $1.577$-approximation algorithm for the PROCT problem
\cite{WCT00PROCTSROCT} by realizing it using
${\cal RNC}$ circuits.
}
In the weighted $2$-MRCT problem \cite{Wu02}, we have additional inputs
$s_1,s_2\in V$ and $\lambda\ge 1.$
The objective is to find a spanning tree $T$ of $G$ minimizing
$\sum_{v\in V} \lambda \, d_T(s_1,v) + d_T(s_2,v).$
When the edge-weight function $w$ is given in unary, we
parallelize the $2$-approximation algorithm \cite{Wu02} into ${\cal RNC}$ circuits, with a slight
degradation in the approximation ratio from $2$ to $2+o(1).$
To the best of our knowledge, our results are the first
parallelized approximation algorithms for the MRCT problem and its
variants.
\end{abstract}

\section{Introduction}
Let $G=(V,E)$ be an undirected graph with a nonnegative edge-weight function
$w.$
The routing cost of a spanning tree $T$ of $G$
is $\sum_{u,v \in V} d_T(u,v)$ where $d_T(u,v)$ is
the weight of any shortest $u$-$v$ path in $T,$
or equivalently, the weight of the simple $u$-$v$ path in $T.$
The {\sc Minimum Routing Cost Spanning Tree} (MRCT) problem
\cite{WLBCRT00} asks for a
spanning tree $T$ of $G$ with the minimum routing cost.
It is also known as the {\sc Shortest Total Path Length Spanning Tree} problem.
The MRCT problem is first proposed by Hu \cite{Hu74}, who referred to the
problem as the {\sc Optimum Distance Spanning Tree Problem}.
In Hu's formulation of the more general
{\sc Optimum Communication Spanning Tree} (OCT) problem \cite{Hu74}, an additional value
$\tau_{u,v}\ge 0$ is given for each pair $(u,v)$ of vertices.
The communication cost \cite{Hu74} of a spanning tree $T$ of $G$ is
$\sum_{u,v\in V} \tau_{u,v} \, d_T(u,v).$
The OCT problem asks for a spanning tree of $G$ with the minimum
communication cost.
When $G$ is a complete graph and the edge-weight function $w$ obeys the triangle inequality, a randomized
$O(\log {|V|})$-approximation algorithm is known for the OCT problem \cite{Bar98, CCGG98, WLBCRT00, FRT03}.
The MRCT problem is the special case of the OCT problem when $\tau_{u,v}=1$ for all $u,v\in V.$

The MRCT problem has applications in network design \cite{Hu74,JLK78}
as well as multiple sequences alignment in computational biology \cite{FD87,
Pev92, Gus93, BLP94, WLBCRT00}.
Unfortunately, it is shown to be ${\cal NP}$-hard \cite{JLK78}, and
it is ${\cal NP}$-hard even when all edge weights are equal \cite{JLK78,GJ79}
or when the edge-weight function obeys the triangle inequality \cite{WLBCRT00}.

Exact and approximation algorithms for the MRCT problem have been extensively
researched \cite{BFW73, Hoa73, DF79, Won80, WCT00MRCT, WLBCRT00, FLS02}.
Boyce et al. \cite{BFW73}, Hoang \cite{Hoa73} and Dionne and
Florian \cite{DF79} study branch-and-bound algorithms as well as heuristic
approximation algorithms for the {\sc Optimal Network Design} problem \cite{BFW73},
which includes the MRCT problem as a special case.
Fischetti et al. \cite{FLS02} give exact algorithms for the MRCT problem
while avoiding exhaustive search.
Wong \cite{Won80} gives a polynomial-time $2$-approximation algorithm for the MRCT problem.
That is, he gives a polynomial-time algorithm that, given a graph $G=(V,E)$ with
a nonnegative edge-weight function $w,$ outputs a spanning tree of $G$ whose
routing cost is at most $2$ times the minimum.
Subsequent work by Wu et al. \cite{WCT00MRCT} shows a different
polynomial-time $2$-approximation algorithm as well as polynomial-time
$15/8, 3/2$ and $(4/3+\epsilon)$-approximation algorithms for the MRCT
problem, where $\epsilon>0$ is an arbitrary constant.
Their results are later improved by Wu et al. \cite{WLBCRT00} to give a
polynomial-time approximation scheme (PTAS) \cite{CLRS01} for the MRCT problem.
That is, a polynomial-time $(1+\epsilon)$-approximation algorithm is given for
any constant $\epsilon>0.$

There are other variants of the MRCT problem that also have
applications in network design \cite{WLBCRT00, WCT00PROCTSROCT,
WCT00PROCTPTAS, Wu02}.
In the {\sc Sum-Requirement Optimal Communication Spanning Tree} (SROCT) problem
\cite{WCT00PROCTSROCT}, each vertex $u$ is associated with a requirement $r(u)\ge 0.$
The objective is to find a spanning tree $T$ of $G$
minimizing $\sum_{u,v\in V} (r(u)+r(v)) \, d_T(u,v).$
The {\sc Product-Requirement Optimal Communication Spanning Tree} (PROCT)
\cite{WCT00PROCTSROCT} problem is to find a spanning tree $T$ of $G$ minimizing
$\sum_{u,v\in V} r(u)\, r(v) \, d_T(u,v).$
The SROCT and PROCT problems are clearly generalizations of the MRCT problem.

Wu et al. \cite{WCT00PROCTSROCT} give a $2$-approximation algorithm for
the SROCT problem.
They also propose a $1.577$-approximation
algorithm \cite{WCT00PROCTSROCT} for the PROCT problem.
The result is improved by Wu et al. \cite{WCT00PROCTPTAS} to yield a
polynomial-time approximation scheme (PTAS) for the PROCT problem.

Another variant of the MRCT problem is the $2$-MRCT problem
\cite{Wu02}.
In this problem, except for $G=(V,E)$ and $w:E\to \mathbb{R}_0^+,$ we are
given two source vertices
$s_1,s_2\in V.$
The objective is to find a spanning tree $T$ of $G$ minimizing
$\sum_{v\in V} d_T(s_1,v) + d_T(s_2,v).$
This problem is ${\cal NP}$-hard even when $w$ obeys the
triangle inequality \cite{Wu02}.
Wu \cite{Wu02} shows a $2$-approximation algorithm as well as
a PTAS for this problem.
A variant of the $2$-MRCT problem is the weighted $2$-MRCT problem
\cite{Wu02} where an additional $\lambda\ge 1$ is given as input.
The objective is to find a spanning tree $T$ of $G$ minimizing
$\sum_{v\in V} \lambda \, d_T(s_1,v) + d_T(s_2,v).$
Wu \cite{Wu02} proposes a $2$-approximation algorithm for the weighted
$2$-MRCT problem.
When the edge-weight function $w$ obeys the
triangle inequality, there is a PTAS for the weighted $2$-MRCT problem \cite{Wu02}.

In this paper, however, we will focus on parallelizing the approximation
algorithms for the above problems.
We first describe our results concerning the MRCT problem.
For an arbitrary $\epsilon>0$ and when the edge-weight function $w$ is given
in unary,
we show that the
$(4/3+\epsilon)$-approximation algorithm proposed by Wu et al.
\cite{WCT00MRCT} can be implemented by an ${\cal RNC}$ circuit.
That is, the approximation algorithm can be performed by a uniform
polynomial-size circuit \cite{Pap94CC} with random gates and poly-logarithmic
depth.
Indeed, with a small probability our parallelized algorithm may fail to find a
$(4/3+\epsilon)$-approximate solution, in which case it outputs ``fail.''
Thus, our algorithm does not fail (to find a
$(4/3+\epsilon)$-approximate solution) without ever knowing that it fails,
which is a desirable property for randomized algorithms with a small
probability of failure.
\comment{
We also show that when the edge-weight function $w$ is integer-valued,
bounded by a polynomial in $|V|,$ and obeys the triangle
inequality,
the $(1+\epsilon)$-approximation algorithm proposed by
Wu et al. \cite{WLBCRT00} can be realized by an ${\cal RNC}$ circuit.
}

We now turn to describe our results concerning the SROCT problem.
When the edge-weight and the vertex-requirement functions are given in
unary, we parallelize the $2$-approximation
algorithm \cite{WCT00PROCTSROCT} by realizing it using
${\cal RNC}$ circuits, with a slight degradation in the
approximation ratio (from the currently best $2$ to our $2+o(1)$).
Still, with a small probability
our algorithm may fail to output a $(2+o(1))$-approximate solution,
in which case it knows the failure and outputs ``fail.''

\comment{
For the PROCT problem,
when the edge-weight function $w$
and
the vertex-requirement function $r$
are integer-valued, bounded by a polynomial in $|V|$ and $w$
obeys the triangle inequality,
we parallelize the $1.577$-approximation algorithm \cite{WCT00PROCTSROCT}
by realizing it using ${\cal RNC}$ circuits.
}

Finally, for the weighted $2$-MRCT problem with the edge-weight function
given in unary,
we parallelize the $2$-approximation algorithm \cite{Wu02} into
${\cal RNC}$ circuits, with a slight
degradation in the approximation ratio (from the currently best $2$ to our $2+o(1)$).
Again, there is a small probability that our algorithm fails to find a
$(2+o(1))$-approximate solution, in which case it outputs ``fail.''

To the best of our knowledge, our results are the first efforts towards
parallelized approximation algorithms for the MRCT problem and its
variants.
Our results open up new opportunities to compute approximate solutions to
the above problems in
parallel poly-logarithmic time.
In the applications of the MRCT problem to network design \cite{Hu74,
JLK78} as well as the applications of the SROCT and PROCT problems to network
design \cite{WCT00PROCTSROCT, WCT00PROCTPTAS}, the network is often modeled as a graph with a nonnegative
edge-weight function representing the distances between pairs of nodes.
Although approximate solutions to the aforementioned problems (MRCT, SROCT,
PROCT, weighted $2$-MRCT) are attainable in polynomial time,
in any real networking environment, however, the cost of traffic
between any pair of nodes may vary over time.
Thus, it is highly desirable to be able to compute approximate solutions to
these problems as fast as possible, so as to reduce the risk that the
traffic costs change during the computation.
Our results imply that approximate solutions to the MRCT,
SROCT and weighted $2$-MRCT problems can indeed
by computed in parallel poly-logarithmic time on multiprocessors.

For other applications of the MRCT problem where the data does not change
quickly over time, for example multiple sequences alignment in
computational biology \cite{FD87, Pev92, Gus93, BLP94, WLBCRT00},
being able to compute approximate solutions to the
MRCT problem in parallel sublinear time is still beneficial.
Indeed, Fischer \cite{Fis01} argues that in many practical applications
today, the input
size is so large that even performing linear-time computations is too time-consuming.
Certainly, multiple sequences alignment in computational biology constitutes
a good example where the input size is usually too large.
It is therefore a desirable property that our algorithms operate in
parallel sublinear time, and in fact poly-logarithmic time.

The main idea underlying our proofs is that many of the previously proposed
approximation algorithms for the MRCT, SROCT and weighted $2$-MRCT problems
rely heavily on finding shortest paths between pairs of vertices in a
graph. This motivates applying the well-known result that ${\cal
NL}\subseteq {\cal NC}$ to parallelize these algorithms since we can guess a
path (possibly the shortest one) between two vertices of a graph in nondeterministic logarithmic space.
There is the complication that, in our proofs, we will often need to
generate the same shortest path between two vertices $u,v,$ whenever
a shortest $u$-$v$ path is needed. For this purpose, we use the isolation technique \cite{Wig94,
GW96, RA00} to slightly modify the edge-weight function of the input graph,
so that there is exactly one shortest path between each pair of vertices
with high probability. We then apply the double-counting technique
\cite{RA00} to decide whether the input graph (with the modified
edge-weight function) exhibits a unique shortest path between each pair $u,v$
of vertices. If so, we are able use the double counting technique
to generate the unique shortest $u$-$v$ path whenever it is needed.
The whole
procedure runs in unambiguous logarithmic space and our results follow
by ${\cal UL}\subseteq {\cal NL}\subseteq {\cal NC}.$ The approximation ratio would be slightly
degraded. The degradation comes from randomly modifying the edge-weight function
when we apply the isolation technique.

Our paper is organized as follows. Section~\ref{defs} provides the basic
definitions. Section~\ref{firstresult} presents the parallelized
$(4/3+\epsilon)$-approximation algorithm for the MRCT problem.
\comment{
Section~\ref{secondresult} presents the parallelized
$(1+\epsilon)$-approximation algorithm for the MRCT problem when the
edge-weight function obeys the triangle inequality.
}
Section~\ref{SROCTsection}--\ref{twoMRCTsection}
describe our parallelized approximation algorithms
for the SROCT and the weighted $2$-MRCT problems, respectively.
Section~\ref{conclusion} concludes the paper.
Proofs are given in the appendix for references.

\section{Notations and basic facts} \label{defs}
Throughout this paper, graphs are simple undirected graphs \cite{Wes01}.
That is, we disallow parallel edges and self-loops.
There will always be a nonnegative edge-weight function mapping each edge to
a nonnegative real number.
For a graph $G,$ $V(G)$
is its vertex set and $E(G)$ is its edge set. Let $R$ be a subgraph of $G.$
A path $P$ connects a vertex $v$ to $R$ (or $V(R)$) if one endpoint of $P$ is $v$ and the other is in
$V(R).$
An edge connecting two vertices $u$ and $v$ is denoted $uv.$
A path connecting two vertices $u$ and $v$ is said to be a $u$-$v$ path.
A path $(v_0,\ldots,v_k)$ is one which traverses $v_0,\ldots,v_k,$
in that order.
A simple path is a path that traverses each vertex at most once \cite{Wes01}.
A graph $G$ contains another graph $G^\prime$ if $G^\prime$ is a subgraph of
$G.$
The set of nonnegative real numbers is denoted $\mathbb{R}_0^+.$
\begin{definition} \label{basicdef}
Let $G=(V,E)$ be an undirected graph and $w:E\to \mathbb{R}^+_0$
be a nonnegative edge-weight function.
The lexicographical ordering on $V$ is that of the encodings of the vertices in
$V,$ assuming any reasonable encoding of a graph.
Let $u,v\in V$ and $R$ be a subgraph of $G.$
The sum of edge weights of $R$ is denoted $w(R).$
When $R$ is a path, $w(R)$ is called the weight or length of $R.$
For $x,y\in V,$
we use $d_G(x,y)$ to denote the weight of any shortest $x$-$y$ path.
We use $d_G(x,R)$ (or $d_G(x,V(R))$) for $\min_{z\in V(R)} d_G(x,z).$
The lexicographically first vertex $x^\prime\in V(R)$
satisfying $d_G(x,x^\prime)=d_G(x,R)$
is denoted $\text{closest}(x,R)$ (or $\text{closest}(x,V(R))$).
The set of all shortest paths connecting $u$ and $v$ is denoted
$\text{SP}_G(u,v).$
$\text{SP}_G(u,R)$ (or $\text{SP}_G(u,V(R))$) is the set of shortest paths connecting $u$ and $R.$
That is, $\text{SP}_G(u,R)$ is the set of paths
that connect $u$ and $R$ and have weight equal to $d_G(u,R).$
For $k\in \mathbb{N},$ $S_{k,u}$ denotes the set of vertices reachable from
$u$ with at most $k$ edges.
$\text{SP}^{(k)}_{u,v}$ denotes the set of
all shortest paths among those $u$-$v$ paths with at most $k$ edges.
That is, $\text{SP}^{(k)}_{u,v}$ is the set of $u$-$v$ paths with at
most $k$ edges whose weight is not larger than any other path with at most $k$ edges.
The union of two graphs $G_1=(V_1,E_1)$ and $G_2=(V_2,E_2)$
is the graph $(V_1\cup V_2,E_1\cup E_2).$
The graph $G$
is strongly min-unique with respect to $w$ if
for all $k\in \mathbb{N}$ and $u,v\in V,$ we have
$\left|\text{SP}^{(k)}_{u,v}\right|\le 1.$
When $w$ is clear from the context, we may simply say that $G$ is strongly
min-unique without referring to $w.$
\end{definition}
In Definition~\ref{basicdef},
it is not hard to show that $G=(V,E)$ is strongly min-unique if
$\left|\text{SP}^{(k)}_{u,v}\right|\le 1$ for all $k\in \{0,\ldots,|V|-1\}$ and
$u,v\in V,$ provided $|V|\ge 3.$

The MRCT of a graph, standing for its \textbf{M}inimum \textbf{R}outing
\textbf{C}ost spanning \textbf{T}ree, is defined below.
\begin{definition} (\cite{CW04}) \label{routingcost}
Given a connected graph $G=(V,E)$ with a nonnegative edge-weight
function $w:E\to \mathbb{R}^+_0,$
the routing cost $c_w(T)$ of
a spanning tree $T$ of $G$ is
$\sum_{u,v\in V} d_T(u,v).$
A spanning tree of $G$ with the minimum routing cost is an MRCT of $G,$
which is denoted by $\text{MRCT}(G)$ for convenience.
\end{definition}
The MRCT problem asks for $\text{MRCT}(G)$ on input $G,w.$
The following fact shows that the routing cost of a tree can be computed
efficiently.
\begin{fact} (\cite{CW04}) \label{loading}
Let $G$ be a graph with a nonnegative edge-weight function $w:E(T)\to \mathbb{R}^+_0$
and $T$ be a spanning tree of $G.$
For each edge $e\in E(T),$ let $T_{e,1}$ and $T_{e,2}$ be the two trees formed by
removing $e$ from $T.$
We have $c_w(T)=\sum_{e\in E(T)} 2|V(T_{e,1})|\, |V(T_{e,2})| \, w(e)$
and
$c_w(T)\le |V(T)|^3/2 \cdot \max_{e\in E(T)} w(e).$
\end{fact}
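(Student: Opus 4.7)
The plan is a standard edge-contribution argument. For the first identity, I would start from the definition $c_w(T) = \sum_{u,v \in V} d_T(u,v)$, expand each $d_T(u,v)$ as $\sum_{e \in E(T)} w(e) \cdot \mathbf{1}[\,e \text{ lies on the unique } u\text{-}v \text{ path in } T\,]$, and swap the order of summation to obtain $c_w(T) = \sum_{e \in E(T)} w(e) \cdot N_e$, where $N_e$ is the number of ordered pairs $(u,v) \in V \times V$ whose $u$-$v$ path in $T$ uses $e$. The key observation is that, because $T$ is a tree, the $u$-$v$ path traverses $e$ if and only if $u$ and $v$ lie in different components of $T - e$. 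Since these two components are exactly $T_{e,1}$ and $T_{e,2}$, the ordered pairs counted by $N_e$ are precisely those with one endpoint in $V(T_{e,1})$ and the other in $V(T_{e,2})$, giving $N_e = 2\, |V(T_{e,1})|\, |V(T_{e,2})|$ (the factor of $2$ accounting for the two orderings of the endpoints). Substituting yields the claimed formula.

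For the inequality, two elementary facts suffice. Since $V(T_{e,1})$ and $V(T_{e,2})$ partition $V(T)$, their sizes sum to $|V(T)|$, and AM--GM gives $|V(T_{e,1})|\, |V(T_{e,2})| \le |V(T)|^2/4$. Also, $|E(T)| = |V(T)| - 1 \le |V(T)|$ since $T$ is a spanning tree. Applying both bounds to each term of the identity and factoring out $\max_{e \in E(T)} w(e)$ yields $c_w(T) \le 2\,(|V(T)| - 1)\cdot |V(T)|^2/4 \cdot \max_{e \in E(T)} w(e) \le |V(T)|^3/2 \cdot \max_{e \in E(T)} w(e)$, which is the desired bound.

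There is essentially no obstacle here; the only subtle point is the summation convention in Definition~\ref{routingcost}, where $c_w(T)$ sums over ordered pairs $(u,v) \in V \times V$ (with the diagonal $u = v$ contributing $0$). It is this ordered-pair convention that produces the factor of $2$ in the identity, and I would state it explicitly at the start of the proof to avoid ambiguity.
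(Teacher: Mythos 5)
Your argument is correct: the edge-contribution decomposition, the observation that the $u$-$v$ path in $T$ crosses $e$ exactly when $u$ and $v$ lie in the two components of $T-e$, and the AM--GM bound $|V(T_{e,1})|\,|V(T_{e,2})|\le |V(T)|^2/4$ together with $|E(T)|=|V(T)|-1$ give exactly the stated identity and inequality, and your remark about the ordered-pair convention correctly accounts for the factor of $2$. The paper itself gives no proof of this fact (it is quoted from \cite{CW04}); your write-up is the standard argument that the citation stands in for, so there is nothing to reconcile.
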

To ease the description, we introduce the following definition.
\begin{definition} (\cite{RA00}) \label{unambdef}
A nondeterministic Turing machine $M$ outputs a string $s$ unambiguously
on input $x$ if it outputs $s$ on exactly one non-rejecting computation branch,
and rejects $x$ on all other computation branches.
The unambiguously output string $s$ is also denoted $M(x).$
\end{definition}
Throughout this paper, when a nondeterministic Turing
machine $A$ runs or simulates another nondeterministic machine $B,$ it means that $A$
runs $B$ and make nondeterministic branches as $B$ does. It does not mean
that $A$ enumerates all computation branches of $B$ and simulate them deterministically.
For convenience, $A$ does not necessarily have to output $B$'s output.
Instead, it may extract portions of $B$'s output for output.

We will need the notion of a general star to introduce the
approximation algorithms for the MRCT problem.
\begin{definition} (\cite{CW04})
Let $G$ be a connected graph with a nonnegative edge-weight function $w: E\to
\mathbb{R}^+_0$ and $S$ be a subtree of $G.$
A spanning tree $T$ containing $S$ is a general star with core $S$
if
each vertex $u\in V$
satisfies $d_T(u,S)=d_G(u,S).$
When $V(S)=\{v\}$ is a singleton,
a general star with core $S$ is also
called a shortest path tree rooted at $v.$
\end{definition}
Given any subtree $S$ of $G=(V,E),$ a general star with core $S$ exists
\cite{WCT00MRCT}.
This follows by observing that for any shortest path $P$ connecting $u\in V$ and
$S,$ the part of $P$ from any vertex $x\in V(P)$ to $S$ constitutes a
shortest path connecting $x$ and $S.$

The notion of a metric graph is defined below.
\begin{definition} (\cite{WLBCRT00})
A complete graph $G$ with a non-negative edge-weight
function $w$ is metric if $w(xy)+w(yz)\ge w(xz)$ for all $x,y,z\in V(G).$
\end{definition}

\section{A parallelized $(4/3+\epsilon)$-approximation for MRCT} \label{firstresult}
We begin with
the following form of the famous isolation lemma. It is implicit in some
previous works \cite{Wig94, GW96, RA00}.
\begin{theorem} (\cite{RA00}) \label{isolation}
Let $G=(V,E)$ be a graph with a nonnegative edge-weight
$w: E\to \mathbb{R}^+_0.$
Let $w_r: E\to \mathbb{R}^+_0$ assign the weight of each $e\in E$
independently and randomly from the uniform distribution over a set
$W\subseteq \mathbb{R}^+_0.$
With probability at least $1-{|V|^5}/{(2|W|)},$ the graph $G$ is strongly min-unique with
respect to $w+w_r.$
\end{theorem}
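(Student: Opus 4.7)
The plan is to reduce strong min-uniqueness to a finite collection of ``bad events'' and then bound each by the Mulmuley--Vazirani--Vazirani isolation technique. By the observation recorded right after the theorem statement, for $|V|\ge 3$ it suffices to prove that $|\text{SP}^{(k)}_{u,v}|\le 1$ simultaneously for all $u,v\in V$ and all $k\in\{0,\ldots,|V|-1\}$; the cases $|V|\le 2$ are trivially strongly min-unique. The overall failure probability will be attacked by a union bound over the $O(|V|^3)$ triples $(u,v,k)$.

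First, I would fix one triple $(u,v,k)$ and one edge $e\in E$, and apply the principle of deferred decisions by conditioning on $\{w_r(f):f\ne e\}$. Under this conditioning, let $\alpha_e$ denote the minimum weight under $w+w_r$ of a $\le k$-edge $u$-$v$ path avoiding $e$, and let $\beta_e$ be the analogous minimum over such paths through $e$ with $w_r(e)$ replaced by $0$; both quantities become constants. Since the minimum weight over paths through $e$ equals $\beta_e+w_r(e)$, any coexistence of a minimum-weight path through $e$ and a minimum-weight path avoiding $e$ forces the single linear equation $w_r(e)=\alpha_e-\beta_e$, which has conditional probability at most $1/|W|$ under the uniform distribution on $W$.

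Next, if $|\text{SP}^{(k)}_{u,v}|\ge 2$, then two distinct optimal paths have a nonempty edge-symmetric difference, so some edge $e^*$ lies in exactly one of them, and for that $e^*$ both the ``avoid $e^*$'' and ``through $e^*$'' minima attain the common optimum, triggering the tie above. Hence the non-uniqueness event for $(u,v,k)$ is contained in $\bigcup_{e\in E}\{\alpha_e=\beta_e+w_r(e)\}$, with probability at most $|E|/|W|$. Union-bounding over the $\binom{|V|}{2}$ unordered pairs $(u,v)$ and the $|V|$ values of $k$ yields a total failure probability of at most $\binom{|V|}{2}\cdot|V|\cdot|E|/|W|\le|V|^5/(4|W|)\le|V|^5/(2|W|)$. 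The main obstacle I foresee is making the deferred-decisions step airtight, in particular confirming that ``min over $\le k$-edge paths using $e$'' really depends on $w_r(e)$ via a single additive term (which uses nonnegativity of $w+w_r$ to force min-weight paths to be simple, so $e$ is used at most once on any relevant path) and that $\alpha_e,\beta_e$ are measurable with respect to the remaining random weights; once that is pinned down, the rest is a textbook application of the isolation lemma combined with a union bound.
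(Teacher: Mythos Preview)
Your approach is essentially identical to the paper's: both identify, for each triple $(u,v,k)$ with $|\text{SP}^{(k)}_{u,v}|\ge 2$, a ``blamed'' edge $e$ lying in one shortest $\le k$-edge path but not another, bound the probability that any fixed $e$ is blamed by $1/|W|$ via deferred decisions on $w_r(e)$, and then union-bound over all $(u,v,k,e)$. Your use of unordered pairs even yields the slightly sharper $|V|^5/(4|W|)$.

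One small point worth noting: the paper sidesteps the obstacle you flag. Rather than arguing that the minimum weight over $\le k$-edge paths through $e$ equals $\beta_e + w_r(e)$ (which, as you anticipate, needs care because nothing a priori rules out walks traversing $e$ more than once), the paper simply observes that if $(k,u,v)$ blames $e$ at some value of $w_r(e)$, then increasing $w_r(e)$ makes every path through $e$ strictly heavier while paths avoiding $e$ are unchanged (so all shortest paths now avoid $e$), and decreasing $w_r(e)$ does the reverse. Strict monotonicity alone forces at most one blaming value of $w_r(e)$, with no need to pin down an exact linear dependence. Your proposed justification via simplicity is not quite sufficient as stated, since nonnegativity of $w+w_r$ guarantees only that \emph{some} minimum-weight path is simple, not that every one is; the monotonicity route is the cleaner fix.
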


The following theorem is implicit in \cite{RA00}.
It uses the double counting technique \cite{RA00} similar to the inductive
counting technique used to prove the Immerman-Szelepcs{\'e}nyi theorem
\cite{Imm88, Sze88}.
\begin{theorem} (\cite{RA00}) \label{doublecount}
There is a nondeterministic logarithmic-space Turing machine
FIND-PATH that,
on input a graph $G=(V,E)$ with a nonnegative edge-weight function
$w:E\to \{0,\ldots,\text{poly}(|V|)\}$ and two vertices $s,t \in V,$
satisfies the following conditions.
\begin{itemize}
\item [1.] If $G$ is not strongly min-unique, then FIND-PATH
outputs ``not strongly min-unique'' unambiguously.
\item [2.] If $G$ is strongly min-unique and has an $s$-$t$ path, then FIND-PATH
outputs the unique path $P\in\text{SP}_G(s,t)$ and its weight $w(P)$ unambiguously.
The edges in $P$ are output in the direction going from $s$ to $t.$
\item [3.] If $G$ is strongly min-unique and does not have an $s$-$t$ path, then FIND-PATH has no
accepting computation branches.
\end{itemize}
\end{theorem}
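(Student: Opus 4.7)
The plan is to realize FIND-PATH via the inductive (double) counting technique of Reinhardt--Allender~\cite{RA00}, in the spirit of the Immerman--Szelepcs\'enyi theorem~\cite{Imm88,Sze88}, augmented with a mode that detects and certifies violations of strong min-uniqueness. The machine inductively computes, for $k=0,1,\ldots,|V|-1$, a pair $(c_k,\sigma_k)$, where $c_k=|S_{k,s}|$ and $\sigma_k=\sum_{v\in S_{k,s}} w(P^{(k)}_{s,v})$, with $P^{(k)}_{s,v}$ the unique element of $\text{SP}^{(k)}_{s,v}$ (well-defined under strong min-uniqueness). The base case $c_0=1,\sigma_0=0$ is trivial. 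The inductive step from $k-1$ to $k$ loops over each target vertex $v$ and computes $d^{(k)}_{s,v}$ by enumerating every $u\in V$ in lexicographical order, nondeterministically declaring whether $u\in S_{k-1,s}$, and, on ``yes'', guessing and verifying $P^{(k-1)}_{s,u}$ edge by edge while taking a running minimum of $w(P^{(k-1)}_{s,u})+w(uv)$ over edges $uv$. At the end of each sweep, the machine cross-checks that the count of ``yes''es equals $c_{k-1}$ and the sum of guessed weights equals $\sigma_{k-1}$, rejecting otherwise. Because $w$ is polynomially bounded, the counters and sums fit in $O(\log|V|)$ bits, so the machine operates in logspace.

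Under strong min-uniqueness each guessed shortest path is uniquely pinned down by the cross-check, so the inductive counting leaves exactly one surviving branch, preserving unambiguity. The machine then reconstructs the unique path in $\text{SP}_G(s,t)$ (Condition~2) by walking from $s$ to $t$: at each intermediate vertex $x$ it chooses the next vertex $y$ satisfying $d^{(|V|-1)}_{s,y}=d^{(|V|-1)}_{s,x}+w(xy)$, with ties broken by the canonical ordering on $V$, and emits the edges in order together with the weight $d^{(|V|-1)}_{s,t}$. If $t\notin S_{|V|-1,s}$ the branch rejects, establishing Condition~3. For Condition~1, the machine splits nondeterministically at the outset into a ``unique mode'' running the above procedure and a ``non-unique mode'' that guesses and verifies the lexicographically smallest tuple $(k,u,v,P_1,P_2)$ with $k\le|V|-1$ and $P_1\ne P_2$ both in $\text{SP}^{(k)}_{u,v}$, outputting ``not strongly min-unique'' on success. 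Canonicality of the certificate is enforced by applying the double-counting subroutine to each lexicographically prior tuple to certify that none is a valid certificate.

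The main obstacle is the double-sided unambiguity of the overall machine. I would need to verify (a) that when $G$ is strongly min-unique, no valid non-uniqueness certificate exists, so every non-unique-mode branch rejects, leaving only the canonical unique-mode branch; and (b) that when $G$ is not strongly min-unique, any alternate choice in the unique mode's path guesses forces either the running count to miss $c_{k-1}$ or the running weight to miss $\sigma_{k-1}$, killing every unique-mode branch, while the non-unique-mode finds exactly one accepting branch at the canonical certificate. The second direction is the delicate one, because it requires the cross-checks to rule out every alternate min-weight path simultaneously; this is where the Reinhardt--Allender double-counting structure is essential, and where the polynomial bound on $w$ is used to keep the arithmetic in logspace.
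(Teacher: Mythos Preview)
Your inductive counting of $(c_k,\sigma_k)$ and the cross-check against guessed paths is the same scheme the paper uses (it calls the corresponding subroutines OUTPUT and INDUCTIVE). The divergence, and the genuine gap, is in how you handle Condition~1.

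Your claim (b) is false as stated. Suppose $G$ is not strongly min-unique and some vertex $x\in S_{k-1,s}$ has two distinct paths $P_1,P_2\in\text{SP}^{(k-1)}_{s,x}$. In your ``unique mode'' sweep, when the machine guesses a witnessing path for $x\in S_{k-1,s}$, guessing $P_1$ and guessing $P_2$ contribute \emph{the same} amount $w(P_1)=w(P_2)$ to the running weight and the same $+1$ to the running count. Both branches therefore pass the $(c_{k-1},\sigma_{k-1})$ cross-check. The Reinhardt--Allender sum forces uniqueness only because a \emph{non-shortest} path overshoots $\sigma_{k-1}$; it does nothing to separate two equally short paths. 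So when $G$ is not strongly min-unique the unique mode does not die---it proliferates, and you lose unambiguity (and may output multiple distinct $s$--$t$ paths on different branches). Your separate ``non-unique mode'' cannot repair this, because unambiguity requires that \emph{all} other branches reject, not merely that one designated branch accepts.

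The paper avoids this by folding the non-uniqueness test into the inductive step itself rather than running a parallel certificate-guessing mode. At stage $k$, for each $x$ it computes the set $Q$ of candidate distances $w_u+w(ux)$ over $u\in S_{k-1,s}$ (together with $w_x$ if $x\in S_{k-1,s}$) using OUTPUT at level $k-1$, and declares ``not strongly min-unique'' if $\min Q$ is achieved twice. The point is that OUTPUT at level $k-1$ is invoked only under the invariant that $|\text{SP}^{(j)}_{s,v}|\le 1$ for all $j\le k-1$, which holds precisely because any earlier violation would already have triggered the tie test and halted the induction. Thus the machine is always unambiguous up to the first level at which a tie appears, and at that level it unambiguously outputs ``not strongly min-unique''. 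Running this from every source $s'\in V$ (not just $s$) catches any violation of strong min-uniqueness. You should replace your two-mode split with this in-line tie detection.
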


The following theorem is due to Wu et al. \cite{WCT00MRCT}.
\begin{theorem}(\cite{WCT00MRCT})\label{approx}
Let $r\in\mathbb{N}$ be a constant
and $G=(V,E)$ be a connected, strongly min-unique graph
with a nonnegative edge-weight function $w: E\to \mathbb{R}^+_0.$
For $1\le k\le r+4$ and $S=(v_1,\ldots,v_k)\in V^k,$
let $R_{1,S}$ be the subgraph of $G$ containing only $v_1.$
For $2\le i\le k,$
let 
$R_{i,S}=R_{i-1,S}\cup P_{i,S}$
where $P_{i,S}\in \text{SP}_G\left(v_i,\text{closest}(v_i,R_{i-1,S}))\right)$
is the unique shortest path
connecting $v_i$ and $\text{closest}(v_i,R_{i-1,S}).$
For some $1\le k\le r+4$ and $S\in V^k,$ every general star $T$ with core $R_{k,S}$
satisfies $$c_w(T)\le \left(\frac{4}{3}+\frac{8}{9r+2}\right) \,
c_w(\text{MRCT}(G)).$$
\end{theorem}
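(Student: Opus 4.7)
The plan is to exhibit the desired $k$ and $S = (v_1,\ldots,v_k)$ by extracting a ``skeleton'' from a fixed MRCT $T^* := \text{MRCT}(G)$. The first step would be to pick $v_1$ as a routing-cost centroid of $T^*$, i.e.\ a vertex minimizing $\sum_{u \in V} d_{T^*}(v_1,u)$, and root $T^*$ at $v_1$. Fact~\ref{loading} expresses $c_w(T^*) = \sum_{e \in E(T^*)} 2|V(T^*_{e,1})|\,|V(T^*_{e,2})|\,w(e)$, so each edge of $T^*$ carries a well-defined ``routing mass.'' Sort the subtrees hanging below $v_1$ by their total mass, descend into the top $r+3$ heaviest branches, and within each pick a single anchor vertex at a depth tuned so that the pendant subtree dangling beneath that anchor accounts for at most a $1/(r+4)$ fraction of $c_w(T^*)$. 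The candidate sequence $S = (v_1,v_2,\ldots,v_k)$, with $k \le r+4$, is $v_1$ followed by these anchors in decreasing order of branch mass.

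I would then verify that the inductive construction $R_{i,S} = R_{i-1,S}\cup P_{i,S}$ actually tracks this skeleton. Strong min-uniqueness forces each $P_{i,S} \in \text{SP}_G(v_i,\text{closest}(v_i,R_{i-1,S}))$ to be \emph{the} unique shortest path in $G$ from $v_i$ to $R_{i-1,S}$, and since the $v_i$-to-$R_{i-1,S}$ path lying inside $T^*$ is a valid competitor, $w(P_{i,S})$ cannot exceed its $T^*$-analogue; summing yields $w(R_{k,S}) \le$ (weight of the skeleton inside $T^*$). The existence of a general star $T$ with core $R_{k,S}$ is the observation recorded right after the definition of a general star.

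For the approximation bound, fix any general star $T$ with core $R_{k,S}$ and split $c_w(T) = \sum_{u,v} d_T(u,v)$ into three buckets according to whether both, one, or neither of $u,v$ lie on $R_{k,S}$. For the third bucket, use $d_T(u,v) \le d_T(u,R_{k,S}) + d_{R_{k,S}}(a_u,a_v) + d_T(v,R_{k,S})$, where $a_u,a_v$ are the attachment points of $u,v$ onto the core. The general-star identity $d_T(u,R_{k,S}) = d_G(u,R_{k,S}) \le d_{T^*}(u,\text{skeleton})$ then bounds the two ``legs'' by the corresponding $T^*$-quantities, while $d_{R_{k,S}}(a_u,a_v)$ is bounded by the skeleton weight. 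The hard part will be the final accounting: one must show that the anchors chosen at $1/(r+4)$-mass depth keep the excess over $c_w(T^*)$ --- that is, the contribution from pairs whose $T$-path does \emph{not} traverse the skeleton --- bounded by $(1/3 + 8/(9r+2))\,c_w(T^*)$. This reduces to an averaging argument across the $r+3$ heaviest branches, anchored by a base-case structural fact that a two-vertex core already yields a $4/3$-approximation; the parameter $r$ then governs how finely the branches may be subdivided, trading enumeration depth against the approximation ratio in precisely the claimed shape.
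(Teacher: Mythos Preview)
The paper does not prove Theorem~\ref{approx} at all: it is quoted verbatim from Wu et al.~\cite{WCT00MRCT} and used as a black box. So there is no ``paper's own proof'' to compare against; your proposal is an attempt to reconstruct the argument from that cited source.

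On its own merits, your sketch has the right overall shape --- choose a small set of anchor vertices inside a fixed MRCT $T^*$, connect them by shortest paths to form the core $R_{k,S}$, and bound the routing cost of any general star on that core --- but two steps are not right as stated. First, the construction in \cite{WCT00MRCT} does not pick ``the top $r+3$ heaviest branches from a centroid.'' It instead extracts a \emph{$\delta$-separator} of $T^*$ (a subtree whose removal leaves components of size at most $\delta|V|$) and shows such a separator exists with at most a constant number of leaves; those leaves are the $v_i$. Your centroid-plus-heavy-branches recipe is a different object, and it is not clear it yields the required separation property when many medium-weight branches hang off $v_1$. Second, your stated base case --- ``a two-vertex core already yields a $4/3$-approximation'' --- is false. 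A one-vertex core (centroid) gives $2$; a three-vertex core gives $3/2$; the ratio $4/3$ is only approached as the number of anchors grows, which is exactly why the bound has the form $4/3 + \Theta(1/r)$. Getting the constants $4/3$ and $8/(9r+2)$ requires the specific $\delta$-separator accounting, not an averaging over branches.

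The middle paragraph --- bounding $w(R_{k,S})$ by the weight of the skeleton inside $T^*$ via strong min-uniqueness, and decomposing $d_T(u,v)$ into two legs plus a core segment --- is correct and is indeed how the cost bound is assembled once the right core is in hand.
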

That $R_{i,S}$ in Theorem~\ref{approx} is a subtree of $G$ for $2\le i\le
k$ is easily shown because $w(e)>0$ for each $e\in E$ by the
strong min-uniqueness of $G.$
\comment{ 
The following theorem is due to Chao and Wu \cite{CW04}.
\begin{theorem} (\cite{CW04}) \label{approx}
Let $G=(V,E)$ be a connected, weighted undirected graph
with a nonnegative edge-weight function $w: E\to \mathbb{R}^+_0.$
There exist vertices $x,y,z\in V$ such that for any $P_{x,y}\in
\text{SP}_G(x,y), P_{y,z}\in \text{SP}_G(y,z)$ and any spanning tree $S$ of
$P_1\cup P_2,$ every general star $T$ with core
$S$ satisfies
$$c_w(T)< \frac{3}{2} \cdot c_w(\text{MRCT}(G)).$$
\end{theorem}
} 
The core $R_{k,S}$ in Theorem~\ref{approx} is unambiguously computable in logarithmic space
on strongly min-unique connected graphs.
To show this, we need the following lemma.
\begin{lemma} \label{add}
There is a nondeterministic logarithmic-space Turing machine $\text{ADD-PATH}$ that,
on input a strongly min-unique connected graph $G=(V,E)$ with a nonnegative edge-weight
function $w:E\to \{0,\ldots,\text{poly}(|V|)\},$ a subgraph $R$ of $G$ and a vertex $v\in V,$
outputs the unique path $P\in \text{SP}_G\left(v,\text{closest}(v,R)\right)$ unambiguously.
\end{lemma}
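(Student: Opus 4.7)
The plan is to realize ADD-PATH as a short sequence of calls to the machine FIND-PATH guaranteed by Theorem~\ref{doublecount}, together with a simple lexicographic scan over $V(R)$.

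First I would identify $x^{*} = \text{closest}(v,R)$. The machine maintains two logarithmic-space registers: a current candidate weight $d^{*}$ (initialized to a value larger than any possible shortest-path weight, which is bounded by $\text{poly}(|V|)$) and a current candidate vertex $x^{*}$ (initially undefined). It iterates over $x \in V(R)$ in lexicographic order and, for each such $x$, simulates FIND-PATH$(G,v,x)$ as a subroutine by branching nondeterministically exactly as FIND-PATH does. Because $G$ is strongly min-unique and connected, every such call has exactly one accepting branch, on which FIND-PATH emits the unique shortest $v$-$x$ path together with its weight $d_G(v,x)$. ADD-PATH reads off only the weight, discards the path, and then updates $(d^{*},x^{*})$: it replaces the pair whenever the new weight is \emph{strictly} smaller than $d^{*}$, and leaves $x^{*}$ unchanged on ties. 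The final $x^{*}$ is therefore precisely the lexicographically first minimizer, i.e.\ $\text{closest}(v,R)$.

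Second, once $x^{*}$ is known, I would invoke FIND-PATH$(G,v,x^{*})$ one more time and, on its unique accepting branch, forward the emitted edges directly (in order from $v$ to $x^{*}$) as the output of ADD-PATH. I deliberately re-simulate FIND-PATH from scratch in this second pass rather than trying to cache the path found in the first pass, since caching would violate the logarithmic space bound.

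The main technical point to verify is unambiguity. Each invocation of FIND-PATH contributes exactly one accepting branch on a strongly min-unique connected input, and ADD-PATH makes no further nondeterministic choices of its own; hence the concatenation of all branch guesses across the first-pass calls and the final second-pass call yields exactly one accepting computation branch, on which the unique path in $\text{SP}_G(v,\text{closest}(v,R))$ is output. The one place that needs care, rather than a genuine obstacle, is the space-reuse argument ensuring that all FIND-PATH simulations, the loop counter over $V(R)$, and the registers $(d^{*},x^{*})$ simultaneously fit in $O(\log|V|)$ space; this is straightforward given the polynomial bound on weights.
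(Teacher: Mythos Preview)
Your proposal is correct and follows essentially the same approach as the paper's proof: iterate over $x\in V(R)$, call FIND-PATH to obtain each $d_G(v,x)$, keep the lexicographically first minimizer as $\text{closest}(v,R)$, and then call FIND-PATH once more to emit the actual path. The only cosmetic difference is that the paper orients its first-pass calls as $\text{FIND-PATH}(G,w,x,v)$ rather than $\text{FIND-PATH}(G,w,v,x)$, which is immaterial in an undirected graph.
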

With Lemma~\ref{add}, we are able to compute
the core $R_{k,S}$ in Theorem~\ref{approx} unambiguously
in logarithmic space on strongly min-unique connected graphs.
\begin{lemma}\label{core}
Let $r\in\mathbb{N}$ be a constant.
There is a nondeterministic logarithmic-space Turing machine CORE that,
on input a strongly min-unique connected graph $G=(V,E)$ with a nonnegative edge-weight
function $w:E\to \{0,\ldots,\text{poly}(|V|)\}$ and $S=(v_1,\ldots,v_k)\in V^k$ where $1\le
k\le r+4,$ unambiguously outputs $R_{k,S}$ defined below.
$R_{1,S}$ is the subgraph of $G$ containing only $v_1.$
For $2\le i\le k,$
$R_{i,S}=R_{i-1,S}\cup P_{i,S}$ where
$$P_{i,S}\in
\text{SP}_G\left(v_i,\text{closest}(v_i,R_{i-1,S})\right)$$ is the unique shortest path
connecting $v_i$ and $\text{closest}(v_i,R_{i-1,S}).$
\end{lemma}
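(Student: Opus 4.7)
My plan is to construct CORE by iterating ADD-PATH (Lemma~\ref{add}) for $i = 2, \ldots, k$, appending the edges of each resulting $P_{i,S}$ to the output. The base case $R_{1,S}$ is trivially the single-vertex subgraph $\{v_1\}$, and since $r$ is a constant so is $k \le r+4$, so there are only constantly many iterations. At each iteration $i$, the vertex $v_i$ and the subgraph $R_{i-1,S}$ are, together with $G$ and $w$, exactly the inputs required by ADD-PATH to produce $P_{i,S}\in \text{SP}_G(v_i,\text{closest}(v_i,R_{i-1,S}))$ unambiguously.

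The main obstacle is that $R_{i-1,S}$ is the union of up to $r+3$ simple paths in $G$ and can have size $\Theta(|V|)$, so CORE cannot store it on its work tape in logarithmic space. I would overcome this by treating $R_{i-1,S}$ as an \emph{implicit} input and supplying it through recomputation. Concretely, I would first observe that the proof of Lemma~\ref{add} (via the double-counting machinery of Theorem~\ref{doublecount}) only accesses $R$ through membership queries of the form ``is vertex $u$ in $V(R)$?'' and ``is edge $e$ in $E(R)$?''; such queries can be answered by a log-space oracle rather than requiring $R$ on the input tape. With this interface in place, CORE is defined recursively: to answer a membership query of ADD-PATH for $R_{i-1,S}$, CORE recursively invokes itself with parameter $i-1$, streams the output $R_{i-1,S}$, and scans it for the queried vertex or edge. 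Because $k \le r+4$ is a constant, the recursion depth is a constant, and each level adds only $O(\log |V|)$ workspace to manage its local counters, pointers, and the current simulation state, so the total space remains $O(\log|V|)$.

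Unambiguity is inherited from the subroutines: ADD-PATH is unambiguous by Lemma~\ref{add}, and each recursive invocation of CORE at parameter $i-1$ is unambiguous by induction on $i$. When an unambiguous computation invokes another unambiguous computation, the product of their branching trees has exactly one accepting combination (the single accepting branch of each factor), so the composed computation is again unambiguous. Hence every bit of CORE's output is produced on exactly one non-rejecting branch. The preconditions of ADD-PATH are also maintained at every stage: $G$ is strongly min-unique with respect to $w$, so the shortest path $P_{i,S}$ connecting $v_i$ to $\text{closest}(v_i,R_{i-1,S})$ is unique, making $R_{i,S}=R_{i-1,S}\cup P_{i,S}$ well-defined and the whole construction deterministic in content even though nondeterministic in form.

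The step I expect to be delicate is the oracle-style reformulation of ADD-PATH. I would need to verify that the double-counting procedure underlying Theorem~\ref{doublecount} never iterates over all of $E(R)$ at once but only asks local membership questions, so that the oracle implementation truly costs only $O(\log|V|)$ additional space per level. Once this is confirmed, the rest of the construction is a straightforward induction on $i$ using the constant bound $k \le r+4$.
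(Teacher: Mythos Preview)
Your proposal is correct and follows essentially the same approach as the paper: invoke ADD-PATH iteratively for $i=2,\ldots,k$, and supply $R_{i-1,S}$ to ADD-PATH by recursive recomputation on the fly rather than storing it, using the constant bound $k\le r+4$ to keep the recursion depth and hence the total space logarithmic. The only minor difference is that the paper phrases the recomputation as supplying arbitrary bits of the encoding of $R_{i-1,S}$ on demand, which sidesteps your worry about whether ADD-PATH uses only membership queries; either framing works here.
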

\comment{ 
\begin{theorem} \label{core}
There is a nondeterministic logarithmic-space Turing machine CORE that, on
input a strongly min-unique, connected, weighted undirected graph $G=(V,E)$
with a nonnegative edge-weight function $w: E\to \mathbb{R}^+_0$ and $x,y,z\in V,$
has the following property.
Let $P_{x,y}\in \text{SP}_G(x,y)$ and $P_{y,z}\in \text{SP}_G(y,z)$ be the unique shortest
paths connecting $x,y$ and $y,z,$ respectively.
CORE computes a spanning tree of $P_{x,y}\cup P_{y,z}$ unambiguously.
\end{theorem}
\begin{proof}
CORE first runs $\text{FIND-PATH}(G,w,x,y)$ to output $P_{x,y}$
unambiguously.
It then simulates $\text{FIND-PATH}(G,w,y,z)$ for $P_{y,z},$
with the following additional check before outputting an edge.
Let $u$ be the most recently generated vertex by
$\text{FIND-PATH}(G,w,y,z)$
and $e=uv$ be the edge that $\text{FIND-PATH}(G,w,y,z)$ is going to output.
CORE simulates $\text{FIND-PATH}(G,w,x,y)$ again to test whether
$v\in V(P_{x,y})$ and outputs $e$ if not so.
If any of the calls to FIND-PATH ends up rejecting, CORE also rejects.
This guarantees that CORE has exactly one non-rejecting computation branch.
It is not hard to see that the output of CORE is a spanning tree of $P_{x,y}\cup P_{y,z}$
\end{proof}
} 
\comment{ 
If we further assume that $G$ is strongly min-unique and $u,v\in V$... a unique path
$P\in \text{SP}_G(u,v).$
For vertices $x,y\notin V(P),$ let $x^\prime, y^\prime\in V(P)$ be the
lexicographically first vertices in $V(P)$ such that
$w\left(\text{SP}_G(x,x^\prime)\right)=w\left(\text{SP}_G(x,P)\right)$
and
$w\left(\text{SP}_G(y,y^\prime)\right)=w\left(\text{SP}_G(y,P)\right),$
respectively.
It is easy to see that if the unique paths $P_x\in
\text{SP}_G(x,x^\prime)$ and $P_y\in \text{SP}_G(y,y^\prime)$ have a node
$z$ in common, then $x^\prime=y^\prime$ and
$P_x$ and $P_y$ follow exactly the same path from $z$ to $x^\prime.$
This observation and Theorem~\ref{approx} give the following corollary.
} 
With Theorem~\ref{approx} and Lemma~\ref{core}, it is not hard to show the
following fact.
\begin{fact} \label{substruct}
Let $r\in\mathbb{N}$ be a constant and
$G=(V,E)$ be a strongly min-unique, connected graph with a nonnegative
edge-weight function $w: E\to \{0,\ldots,\text{poly}(|V|)\}.$
For a sequence $S$ of
at most $r+4$ vertices in $V,$
let $C_S=\text{CORE}(G,w,S)$
and $P_u\in \text{SP}_G\left(u,\text{closest}(u,C_S)\right)$ for $u\in V\setminus V(C_S).$
Then
$$T_S=C_S \cup \bigcup_{u\in V\setminus V(C_S)} P_u$$
is a general star with core $S,$
and $$c_w(T_S)< \left(\frac{4}{3}+\frac{8}{9r+12}\right) \cdot
c_w(\text{MRCT}(G))$$
for some $S.$
\end{fact}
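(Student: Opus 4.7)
The plan is in two parts: first, show that for every sequence $S$ of at most $r+4$ vertices the graph $T_S$ defined in the statement is a general star with core $C_S$; second, invoke Theorem~\ref{approx} to pick a particular $S$ for which the cost bound holds.

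For the first part, spanning-ness is immediate (each vertex lies in $V(C_S)$ or is the initial endpoint of some $P_u$), so the real work lies in acyclicity and in verifying $d_{T_S}(u,V(C_S))=d_G(u,V(C_S))$. The worry about acyclicity is that two paths $P_u$ and $P_{u'}$ could share an interior vertex and then diverge, producing a cycle through $C_S$. I would rule this out via the following consistency claim: if $v\notin V(C_S)$ is an interior vertex of $P_u$, then $\text{closest}(v,V(C_S))=\text{closest}(u,V(C_S))$. To prove it, set $a=\text{closest}(u,V(C_S))$ and $b=\text{closest}(v,V(C_S))$; the tail of $P_u$ from $v$ has weight $d_G(v,a)$, and a short rerouting argument forces $d_G(v,a)=d_G(v,V(C_S))$ (else $P_u$ could be shortened by splicing in a shorter $v$-to-$V(C_S)$ path). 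Hence both $a$ and $b$ realize $d_G(v,V(C_S))$; if $b$ were lexicographically before $a$, then prepending the $u$-to-$v$ portion of $P_u$ to $P_v$ would yield a shortest $u$-to-$V(C_S)$ path ending at $b$, contradicting $a=\text{closest}(u,V(C_S))$. So $b=a$, and strong min-uniqueness then forces $P_v$ to coincide with the tail of $P_u$ starting at $v$. Letting $\pi(v)$ denote the second vertex of $P_v$, I then get $E(T_S)=E(C_S)\cup\{v\pi(v):v\in V\setminus V(C_S)\}$, a set of exactly $|V|-1$ edges that is connected by iterating $\pi$; so $T_S$ is a spanning tree, and the general-star property follows because the unique $T_S$-path from any $u$ to $V(C_S)$ is exactly $P_u$, of weight $d_G(u,V(C_S))$.

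For the second part, by Lemma~\ref{core} the core $C_S=\text{CORE}(G,w,S)$ agrees with the subtree $R_{k,S}$ featured in Theorem~\ref{approx} with $k=|S|$. That theorem provides some $k\le r+4$ and $S\in V^k$ for which every general star with core $R_{k,S}$ has routing cost at most $(4/3+8/(9r+2))\cdot c_w(\text{MRCT}(G))$, and applying this to the general star $T_S$ from the first part yields the stated bound.

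The main obstacle is the consistency claim in the first part. Strong min-uniqueness on its own does not prevent two unique shortest paths to different core vertices from combining into a cycle inside $T_S$; one really has to exploit the lexicographic tie-breaking hidden in $\text{closest}(\cdot,\cdot)$ to force the closest core vertex of every interior point of $P_u$ to agree with that of $u$.
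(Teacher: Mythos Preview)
Your proposal is correct and matches the paper's intended route: the paper gives no explicit proof of this fact, merely remarking that it follows from Theorem~\ref{approx} and Lemma~\ref{core}, and your two-part plan---verify that $T_S$ is a general star with core $C_S$ via the consistency claim about $\text{closest}(\cdot,C_S)$, then apply Theorem~\ref{approx} to the particular $S$ it supplies---is exactly that argument fleshed out. The only point worth flagging is the constant: Theorem~\ref{approx} yields $8/(9r+2)$ whereas the fact asserts $8/(9r+12)$, a discrepancy internal to the paper rather than a defect in your reasoning.
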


The general star with a core in Fact~\ref{substruct}
can be computed unambiguously in logarithmic space, as the next lemma shows.
\begin{lemma} \label{key}
Let $r\in\mathbb{N}$ be a constant.
There is a nondeterministic logarithmic-space Turing machine $\text{STAR}$ that, on input
a strongly min-unique connected graph $G=(V,E)$ with a
nonnegative edge-weight function $w: E\to \{0,\ldots,\text{poly}(|V|)\}$ and a sequence $S$
of at most $r+4$ vertices in $V,$
outputs $C_S=\text{CORE}(G,w,S)$ and each unique path in
$\text{SP}_G\left(u,\text{closest}(u,C_S)\right)$
for $u\in V\setminus V(C_S)$
unambiguously.
\end{lemma}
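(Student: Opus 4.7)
The plan is to build STAR by composing the unambiguous logarithmic-space machines already constructed, namely CORE from Lemma~\ref{core} and ADD-PATH from Lemma~\ref{add}. The guiding principle is the standard fact that unambiguous logarithmic-space computations compose: whenever a subroutine's output is too large to store on the work tape, the caller re-simulates the subroutine on demand, exactly as is done inside FIND-PATH in Theorem~\ref{doublecount}.

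First, STAR simulates $\text{CORE}(G,w,S)$ and echoes every output symbol directly onto its own output tape; by Lemma~\ref{core} this phase is unambiguous and uses logarithmic work space. After this phase, $C_S$ has been written out in full.

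Second, STAR enumerates the vertices $u\in V$ in lexicographical order. For each $u$, it decides whether $u\in V(C_S)$ by re-simulating CORE on the side and scanning for $u$ among the vertices of the path fragments it produces; this test is unambiguous by Lemma~\ref{core}. Whenever $u\notin V(C_S)$, STAR simulates $\text{ADD-PATH}(G,w,C_S,u)$ and echoes its output. The subgraph input $C_S$ itself may be too large to keep on the work tape, so whenever ADD-PATH needs to probe $C_S$ (to test whether a given vertex lies in $V(C_S)$, or whether a given edge lies in $E(C_S)$), STAR again re-runs CORE from scratch and reads off the requested bit. Each such probe costs only logarithmic additional work space.

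The main obstacle is preserving unambiguity under this composition. Each inner invocation of CORE or ADD-PATH has exactly one non-rejecting branch on the strongly min-unique input $G$, and STAR is designed to reject along any branch on which an inner simulation rejects; this is the same convention used in building FIND-PATH. By synchronizing the nondeterministic branches of caller and callee in the standard way, STAR inherits unique-acceptance from its subroutines, and the sole non-rejecting branch outputs $C_S$ together with each unique path $P_u\in \text{SP}_G\!\left(u,\text{closest}(u,C_S)\right)$ for every $u\in V\setminus V(C_S)$, which is precisely what the lemma requires.
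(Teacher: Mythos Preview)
Your proof is correct and follows essentially the same approach as the paper. The only cosmetic difference is that you invoke $\text{ADD-PATH}(G,w,C_S,u)$ as a black box (feeding it $C_S$ via on-the-fly re-simulation of CORE), whereas the paper inlines ADD-PATH's logic directly, iterating over $v\in V$, testing $v\in V(C_S)$ via CORE, computing $d_G(u,v)$ via FIND-PATH, and recording the lexicographically first minimizer before a final FIND-PATH call; the underlying mechanism and space analysis are identical.
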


The following lemma allows unambiguous logarithmic-space computation of the
routing cost of a tree.
\begin{lemma} \label{routpair}
There is a nondeterministic logarithmic-space Turing machine ROUT-PAIR that, on input
a tree $T$ with a nonnegative
edge-weight function $w: E(T)\to \{0,\ldots,\text{poly}(|V(T)|)\}$ and $s,t\in V(T),$
unambiguously outputs the unique simple path $P^*$ connecting
$s$ and $t$ in $T$ and $w(P^*).$
\end{lemma}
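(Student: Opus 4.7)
The plan is to reduce to FIND-PATH (Theorem~\ref{doublecount}) by perturbing the edge-weight function so that $T$ becomes strongly min-unique, then extract $P^*$ and $w(P^*)$ separately.

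First, I would define $w'(e)=w(e)+1$, a modification ROUT-PAIR can produce on the fly without any extra space. Under $w'$ every edge of $T$ has weight at least $1$, so any walk from $u$ to $v$ in $T$ that is not the unique simple $u$-$v$ path $P_{uv}$ must traverse some edge strictly more often than $P_{uv}$ does, and each such extra traversal adds at least $1$ to the total weight. Hence $P_{uv}$ is the unique minimum-weight walk from $u$ to $v$ under $w'$; in particular $|\text{SP}^{(k)}_{u,v}|\le 1$ for every $k\in\mathbb{N}$ and every $u,v\in V(T)$, so $T$ is strongly min-unique with respect to $w'$. The new weights still lie in $\{0,\ldots,\text{poly}(|V(T)|)\}$, as required by Theorem~\ref{doublecount}.

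Next, ROUT-PAIR simulates $\text{FIND-PATH}(T,w',s,t)$. Since $T$ is strongly min-unique w.r.t. $w'$ and contains an $s$-$t$ path (because $T$ is connected), case~2 of Theorem~\ref{doublecount} applies, and FIND-PATH unambiguously outputs the unique path in $\text{SP}_T(s,t)$ under $w'$. By the previous paragraph this path is precisely the unique simple $s$-$t$ path $P^*$ in $T$. ROUT-PAIR forwards these edges as its own output for $P^*$ and simultaneously maintains a single log-space counter accumulating the original weight $w(e)$ of each edge $e$ as FIND-PATH emits it. Since every edge weight is at most $\text{poly}(|V(T)|)$ and $P^*$ has fewer than $|V(T)|$ edges, the counter stays polynomially bounded and fits in $O(\log|V(T)|)$ bits. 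When FIND-PATH halts, ROUT-PAIR emits the counter value as $w(P^*)$. Because FIND-PATH runs unambiguously and the surrounding bookkeeping is deterministic, ROUT-PAIR itself is an unambiguous logarithmic-space nondeterministic machine.

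The main obstacle I expect is the strong min-uniqueness argument: one has to verify carefully that the $+1$ shift breaks ties not only with the obvious simple $s$-$t$ path, but with all walks of arbitrary length $\le k$ joining $s$ and $t$, in particular walks that repeatedly traverse originally zero-weight edges. Once that combinatorial point is settled, everything else reduces to bookkeeping on top of Theorem~\ref{doublecount}.
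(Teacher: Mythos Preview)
Your argument is correct, but the paper takes a much more direct route. Instead of reducing to FIND-PATH, the paper's ROUT-PAIR simply guesses, edge by edge, a walk starting at $s$ that never immediately backtracks (never enters a vertex right after leaving it), accepting when it reaches $t$. In a tree any non-backtracking walk is simple---the first repeated vertex would yield a closed walk with distinct interior vertices, hence either an immediate backtrack or a genuine cycle---so the unique simple $s$-$t$ path is the only accepting guess, and its weight $w(P^*)$ is accumulated along the way. No appeal to Theorem~\ref{doublecount} is needed.

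Your approach is sound and nicely modular: the $+1$ shift makes every edge positive, and your cut/multiplicity observation (each edge of $P_{uv}$ must be crossed at least once by any $u$-$v$ walk, and a non-simple walk has strictly more edge traversals in total) indeed forces strong min-uniqueness, after which FIND-PATH does the rest. The trade-off is that you invoke the full double-counting machinery for what is, on a tree, a one-line combinatorial fact. The paper's proof is more elementary and self-contained; yours has the virtue of illustrating how FIND-PATH can be leveraged even when the given weights are not a priori strongly min-unique, by a deterministic perturbation rather than the random one used elsewhere in the paper.
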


Combining Fact~\ref{substruct} and Lemmas~\ref{key}--\ref{routpair} gives the following
lemma.
\begin{lemma} \label{unambiguousapprox}
Let $r\in\mathbb{N}$ be a constant.
There exists a nondeterministic logarithmic-space Turing machine APPROX that, on input
a strongly min-unique connected graph $G=(V,E)$ with a nonnegative edge-weight function $w: E\to
\{0,\ldots,\text{poly}(|V|)\},$
unambiguously outputs a spanning tree $T$ of $G$ with
$$c_w(T)< \left(\frac{4}{3}+\frac{8}{9r+12}\right) \cdot
c_w(\text{MRCT}(G)).$$
\end{lemma}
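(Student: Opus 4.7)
The plan is to combine Fact~\ref{substruct} with Lemmas~\ref{key} and \ref{routpair} through a simple enumerate-and-minimize strategy, carried out by a compositional use of the two sub-routines as unambiguous log-space oracles. APPROX deterministically enumerates every sequence $S=(v_1,\ldots,v_k)\in V^k$ with $1\le k\le r+4$ in lexicographic order; since $r$ is constant, there are only $\text{poly}(|V|)$ such sequences, and a log-space counter suffices.

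For each $S$, APPROX unambiguously computes $c_w(T_S)$ by looping over all ordered pairs $(s,t)\in V\times V$ and invoking ROUT-PAIR on $T_S$ to obtain $w(P^*)$, where $P^*$ is the unique simple $s$-$t$ path in $T_S$, then accumulating these weights into a running sum. Since $T_S$ is not given to APPROX as an explicit input, each edge-query of the form ``is $uv\in E(T_S)$?'' is resolved on the fly by simulating $\text{STAR}(G,w,S)$ and scanning its edge-output stream: Lemma~\ref{key} guarantees that this simulation is unambiguous and log-space, and each simulation is nested inside ROUT-PAIR's own unambiguous nondeterminism. As the enumeration proceeds, APPROX maintains the minimum cost seen so far together with the lexicographically first sequence $S^*$ attaining it. By Fact~\ref{loading} and the unary-weight hypothesis $w(e)\le \text{poly}(|V|)$, each routing cost fits in $O(\log |V|)$ bits, so the running minimum and $S^*$ occupy only log-space. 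Once the loop terminates, APPROX outputs $T_{S^*}$ by invoking $\text{STAR}(G,w,S^*)$ one final time and forwarding its edges.

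Correctness is immediate from Fact~\ref{substruct}: some sequence $S$ of length at most $r+4$ satisfies $c_w(T_S)<\left(\frac{4}{3}+\frac{8}{9r+12}\right)\cdot c_w(\text{MRCT}(G))$, so the selected minimizer $S^*$ does as well. The outer loop over $S$ and over pairs $(s,t)$ is deterministic; all nondeterminism lives in the nested STAR and ROUT-PAIR invocations, each of which has exactly one non-rejecting branch and rejects everywhere else.

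The main technical obstacle is preserving unambiguity under this composition, since a naive product of two nondeterministic sub-routines could multiply accepting branches. The point is that each sub-routine's accepting configuration is unique and that sub-routines are invoked sequentially, so their configuration spaces compose additively rather than multiplicatively along an accepting run: at every sub-routine call, APPROX proceeds down the one non-rejecting branch of the called machine (rejecting on every other branch) before returning to the outer deterministic control. Consequently APPROX has exactly one accepting branch overall, on which it emits $T_{S^*}$, as required.
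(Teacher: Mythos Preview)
Your proof is correct and follows essentially the same approach as the paper: enumerate all sequences $S$ of length at most $r+4$, compute $c_w(T_S)$ by summing $\text{ROUT-PAIR}$ outputs over all pairs while regenerating $T_S$ on the fly via $\text{STAR}(G,w,S)$, and output $T_{S^*}$ for the minimizing $S^*$. Your additional remarks on why the routing costs fit in logarithmic space and why unambiguity survives the nested composition are welcome clarifications that the paper leaves implicit.
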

The following lemma will be useful.
\begin{lemma}\label{smallerr}
Let $\alpha>0$ be a constant.
Let $G=(V,E)$ be a graph with a nonnegative edge-weight function $w:E\to
\mathbb{R}^+_0,$ and the minimum nonzero weight assigned by $w,$ if it exists,
is at least $1.$
Let $T_1$ and $T_2$ be spanning trees of $G.$
Let $w_r$ assign to each edge $e\in E$ a nonnegative weight $w(e)\le
1/{|V|^4}$ and $w^\prime=w+w_r.$
Then
\begin{eqnarray}
c_{w^\prime}(T_1)\le \alpha \, c_{w^\prime}(T_2) \label{eqA}
\end{eqnarray}
implies
\begin{eqnarray}
c_{w}(T_1)\le \alpha \, (1+\frac{1}{2|V|}) \, c_{w}(T_2) \label{eqB}
\end{eqnarray}
for sufficiently large $|V|.$
\end{lemma}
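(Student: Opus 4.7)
The plan rests on the observation that the routing cost is linear in the edge-weight function: by Fact~\ref{loading}, $c_{w^\prime}(T)=c_w(T)+c_{w_r}(T)$ for every spanning tree $T.$ Moreover, since $w_r(e)\le 1/|V|^4$ for every $e\in E,$ the second formula of Fact~\ref{loading} yields $c_{w_r}(T)\le |V|^3/2\cdot 1/|V|^4=1/(2|V|).$ Thus the $w_r$-perturbation of any spanning tree's routing cost is at most $1/(2|V|),$ which is the only quantitative input that the argument will use.

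Unfolding hypothesis~(\ref{eqA}) with these two observations, I get
\[
c_w(T_1)\le c_{w^\prime}(T_1)\le \alpha\, c_{w^\prime}(T_2)=\alpha\, c_w(T_2)+\alpha\, c_{w_r}(T_2)\le \alpha\, c_w(T_2)+\frac{\alpha}{2|V|},
\]
which is an additive-error weakening of the target~(\ref{eqB}). In the generic case $c_w(T_2)\ge 1,$ the additive error $\alpha/(2|V|)$ is dominated by $\alpha\, c_w(T_2)/(2|V|),$ and~(\ref{eqB}) follows at once.

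The main obstacle is the degenerate case $c_w(T_2)<1.$ I would handle it by exploiting the hypothesis that the smallest nonzero value of $w,$ if any, is at least $1$: then every edge, every simple path, and hence every tree-distance $d_T(u,v)$ has $w$-length in $\{0\}\cup[1,\infty),$ so that $c_w(T)$ itself lies in $\{0\}\cup[1,\infty)$ for every spanning tree $T.$ Consequently, $c_w(T_2)<1$ in fact means $c_w(T_2)=0,$ and the displayed inequality then gives $c_w(T_1)\le \alpha/(2|V|)<1$ for all sufficiently large $|V|,$ which by the same integrality-style argument forces $c_w(T_1)=0.$ Conclusion~(\ref{eqB}) then collapses to $0\le 0$ and holds trivially, completing the case analysis.
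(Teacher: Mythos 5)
Your proposal is correct and follows essentially the same route as the paper's own proof: both rest on the bound $c_{w_r}(T)\le |V|^3/2\cdot |V|^{-4}=1/(2|V|)$ from Fact~\ref{loading}, split into the cases $c_w(T_2)\ge 1$ and $c_w(T_2)=0$ (justified by the minimum-nonzero-weight hypothesis), and in the degenerate case deduce $c_w(T_1)<1$, hence $c_w(T_1)=0$, for large $|V|$. The only cosmetic difference is that you phrase the generic case as an additive-error bound converted to a multiplicative one, while the paper writes the multiplicative chain directly.
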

Combining Theorem~\ref{isolation} and
Lemma~\ref{unambiguousapprox}--\ref{smallerr} yields the following theorem.
\begin{theorem} \label{parallel}
Let $\epsilon>0$ be a constant.
There is an ${\cal RNC}^2$ algorithm PARALLEL that, on input a weighted undirected graph
$G=(V,E)$ with a nonnegative edge-weight function $w: E\to
\{0,\ldots,\text{poly}(|V|)\},$
satisfies the following.
\begin{itemize}
\item [1.] If $G$ is disconnected, then $\text{PARALLEL}(G,w)$ outputs
``disconnected.''
\item [2.] If $G$ is connected, then $\text{PARALLEL}(G,w)$
outputs a spanning tree of $G$ unambiguously or outputs ``fail''
unambiguously.
The probability that $\text{PARALLEL}(G)$ outputs a spanning tree of $G$
unambiguously is at least $1-1/{(2|V|)}.$
If $\text{PARALLEL}(G)$ outputs a spanning tree $T$ of $G$ unambiguously, then
$$c_{w}(T) \le \left(\frac{4}{3}+\epsilon\right) \cdot
c_{w}(\text{MRCT}(G)).$$
\end{itemize}
\end{theorem}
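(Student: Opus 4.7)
The plan is to stitch together Theorem~\ref{isolation}, Lemma~\ref{unambiguousapprox}, and Lemma~\ref{smallerr} by composition. PARALLEL first uses an ${\cal NL}$ reachability subroutine to check whether $G$ is connected, outputting ``disconnected'' if not. Otherwise it applies the isolation lemma to produce a random perturbation $w_r$ under which $G$ becomes strongly min-unique with probability at least $1-1/(2|V|)$, runs the unambiguous logarithmic-space algorithm APPROX from Lemma~\ref{unambiguousapprox} on the perturbed instance, and converts the guarantee back to $w$ using Lemma~\ref{smallerr}. The whole construction lies in ${\cal RNC}^2$ via ${\cal UL}\subseteq{\cal NL}\subseteq{\cal NC}^2$.

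The only nontrivial step is choosing the scale so that the three hypotheses hold simultaneously. I would rescale $w$ to $\tilde w=|V|^{10}\, w$ and then pick $\tilde w_r(e)\in\{1,2,\ldots,|V|^6\}$ uniformly and independently for each $e\in E$ via random gates. Applying Theorem~\ref{isolation} with $W=\{1,\ldots,|V|^6\}$ gives strong min-uniqueness of $(G,\tilde w+\tilde w_r)$ with probability at least $1-|V|^5/(2|V|^6)=1-1/(2|V|)$. Since strong min-uniqueness and ratios of routing costs are invariant under uniform scaling of edge weights, the same property transfers to $(G,w+w_r)$ where $w_r=\tilde w_r/|V|^{10}\le 1/|V|^4$; meanwhile the integer-valued, polynomially bounded function $\tilde w+\tilde w_r$ is what we actually feed into APPROX.

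When the call to APPROX produces a tree $T$, Lemma~\ref{unambiguousapprox} combined with the trivial inequality $c_{w+w_r}(\text{MRCT}_{w+w_r}(G))\le c_{w+w_r}(\text{MRCT}_w(G))$ yields $c_{w+w_r}(T)\le \alpha\, c_{w+w_r}(\text{MRCT}(G))$ with $\alpha=4/3+8/(9r+12)$, and Lemma~\ref{smallerr} then upgrades this to $c_w(T)\le \alpha(1+1/(2|V|))\, c_w(\text{MRCT}(G))$. Fixing $r$ as the smallest constant with $\alpha\le 4/3+\epsilon/2$ makes $\alpha(1+1/(2|V|))\le 4/3+\epsilon$ for every sufficiently large $|V|$; the constantly many smaller instances are solved exactly by brute force. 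If instead APPROX emits the ``not strongly min-unique'' signal on an internal FIND-PATH call, PARALLEL outputs ``fail'' on that same branch.

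The main obstacle I anticipate is bookkeeping rather than conceptual: verifying that when several unambiguous subroutines are composed the overall machine remains unambiguous, so that ``fail'' and the spanning tree output are mutually exclusive outcomes with the stated probabilities, and pinning down the scale factors $|V|^{10}$ and $|V|^6$ so that the isolation failure bound $|V|^5/(2|W|)\le 1/(2|V|)$ and the smallness requirement $w_r(e)\le 1/|V|^4$ of Lemma~\ref{smallerr} are simultaneously met.
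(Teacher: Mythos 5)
Your proposal is correct and follows essentially the same route as the paper's proof: perturb $w$ using Theorem~\ref{isolation} at exactly the scale that gives failure probability $1/(2|V|)$ and $w_r(e)\le 1/|V|^4$, run the unambiguous APPROX of Lemma~\ref{unambiguousapprox} on the (rescaled, integer) perturbed weights with $r$ chosen so that $8/(9r+12)<\epsilon/2$, transfer the guarantee back to $w$ via the inequality $c_{w'}(\text{MRCT}_{w'}(G))\le c_{w'}(\text{MRCT}_w(G))$ and Lemma~\ref{smallerr}, and conclude by ${\cal UL}\subseteq{\cal NL}\subseteq{\cal NC}^2$. The only cosmetic difference is that the paper certifies strong min-uniqueness up front with one explicit call to FIND-PATH (whose stated guarantee covers the non-min-unique case), rather than relying on APPROX's internal FIND-PATH calls to emit the failure signal, a behavior not literally promised by Lemma~\ref{unambiguousapprox}.
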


\comment{ 
We can also turn the random weight assignment of PARALLEL into polynomially long
advices if we investigate the proof of Theorem~\ref{parallel}.
The method is similar to that in \cite{RA00}.
\begin{cor}
Let $\epsilon>0$ be a constant.
There is a ${\cal UL}/\text{poly}$ algorithm UNAMB-APPROX that,
on input a graph $G$ with a nonnegative edge-weight function
$w:E\to \mathbb{R}^+_0,$
satisfies the following.
\begin{itemize}
\item [1.] If $G$ is disconnected, then $\text{UNAMB-APPROX}(G,w)$ outputs
``disconnected.''
\item [2.] If $G$ is connected, then $\text{UNAMB-APPROX}(G,w)$
outputs a spanning tree $T$ of $G$ with
$$c_{w}(T) \le (\frac{4}{3}+\epsilon) \cdot c_{w}(\text{MRCT}(G)).$$
\end{itemize}
\end{cor}
} 

\comment{
\section{Metric MRCT problem with small weights and requirements} \label{secondresult}
\comment{ 
The following theorem is implicit in \cite{Pap94CC}.
For completeness we include a proof to it.
\begin{theorem}\label{isolation2}
Let $G=(V,E)$ be the complete bipartite graph with $n$ vertices on each
partite.
Let $w$ assign assign a weight to each edge independently and randomly
using the uniform distribution over a set $W\subseteq \mathbb{R}^+_0.$
With probability at least $1-{|E|}/{|W|},$ there is a unique minimum-weight perfect
matching of $G,$ where the weight of a matching is the sum of its edge
weights with respect to $w.$
respect to $w.$
\end{theorem}
\begin{proof}
For any weight assignment, if there are two minimum-weight perfect matchings then
there must be an edge $e$ such that at least one minimum perfect matching contains
it, and at least one does not.
In this case we say that $e$ is blamed.
Hence, the probability that there are two minimum perfect matchings is at most
the sum of the probability that $e$ is blamed over $e\in E.$
But for any weight assignment of $w$ to $E\setminus\{e\},$ there is at most one
weight assignment to $e$ such that $e$ is blamed.
This is because if a weight assignment $w(e)$ to $e$ makes it blamed,
then increasing or decreasing $w(e)$ forces all minimum-weight perfect matchings to exclude $e$
or include $e,$ respectively, making $e$ no more blamed.
Thus, the probability of blaming $e$ is at most ${|E|}/{|W|},$
and the sum of the probability of blaming $e$ over $e\in E$ is at most
${|E|}/{|W|}.$
\end{proof}
} 
We begin this section with the following definition, which is slightly
modified from \cite{WLBCRT00}.
\begin{definition} (\cite{WLBCRT00})
Let $k\in\mathbb{N}.$
Let $G=(V,E)$ be a metric graph with a nonnegative edge-weight function $w.$
Let $S$ be a subtree of $G$ with $|V(S)|\le k$ and $n_1,\ldots,n_{|V(S)|}\ge 0$
be such that $\sum_{1\le i\le |V(S)|} n_i=|V|-|V(S)|.$
Denote by $v_i$ the lexicographically $i$th vertex in $V(S)$ for $1\le i\le
|V(S)|.$
A spanning tree $T$ of $G$ is said
to have configuration $(S,n_1,\ldots,n_{|V(S)|})$ if it contains
$S$ and exactly $n_i$ edges connecting $v_i$ to $V\setminus V(S)$
for $1\le i\le |V(S)|.$
\end{definition}
That is, $T$ has configuration $(S,n_1,\ldots,n_{|V(S)|})$ if every vertex
outside of $V(S)$ is a leaf in $T,$ and the vertices in $V(S)$ satisfy the
degree constraints imposed by the configuration.
The following theorem is due to Wu et al. \cite{WLBCRT00}.
\begin{theorem} (\cite{WLBCRT00}) \label{metricPTAS}
Let $\epsilon>0$
and $k=\lceil{2/\epsilon}-1\rceil$ be constants.
For any metric graph $G=(V,E)$ with a nonnegative edge-weight
function $w,$
there exist a subtree $S$ of $G$ with $|V(S)|\le k,$
$n_1,\ldots,n_{|V(S)|}\ge 0$ with $\sum_{1\le
i\le |V(S)|} n_i=|V|-|V(S)|$ and a spanning tree $T$ with configuration
$( S,n_1,\ldots,n_{|V(S)|} )$ such that
$$c_w(T)\le (1+\epsilon)\, c_w(\text{MRCT}(G)).$$
\end{theorem}
Suppose $S$ is a subtree of $G$ with $V(S)\le k,$ and $n_1,\ldots,n_{|V(S)|}\ge 0$
are such that $\sum_{1\le i\le |V(S)|} n_i=|V|-|V(S)|.$
Denote by $v_i$ the lexicographically $i$th vertex in $S.$
Let $H(S,n_1,\ldots,n_{|V(S)|})$ be the bipartite graph where one
partite $A$ contains each $v_i$ duplicated $n_i$ times and the other partite
contains $V\setminus\{v_1,\ldots,v_{|V(S)|}\}.$
The weight of each edge of $H(S,n_1,\ldots,n_{|V(S)|})$ is that
induced by $G$ and $w.$ That is, the weight of an edge connecting $u\in
V\setminus V(S)$ a duplicate of $v_i$ is $w(uv_i).$
Clearly, any perfect matching of $H(S,n_1,\ldots,n_{|V(S)|})$
naturally induces a spanning tree of $G$ with configuration
$(S,n_1,\ldots,n_{|V(S)|}).$
Finding a tree with minimum routing cost among trees with a given
configuration can be done by solving the minimum-weight perfect matching
problem, as stated below.
\begin{theorem} (\cite{WLBCRT00}) \label{configbest}
Let $k\in\mathbb{N}$ and $G=(V,E)$ be a metric graph with a nonnegative
edge-weight function $w.$ Let $S$ be a subtree of $G$ with $|V(S)|\le k$ and
$n_1,\ldots,n_{|V(S)|}$ be such that $\sum_{1\le i\le |V(S)|}
n_i=|V|-|V(S)|.$
The spanning tree $T$ of $G$ induced by any minimum-weight perfect matching
of $H(S,n_1,\ldots,n_{|V(S)|})$ has the minimum routing cost
among all spanning trees with configuration $(S,n_1,\ldots,n_{|V(S)|}).$
\end{theorem}
The following theorem is well-known.
\begin{theorem} (\cite{MVV87}) \label{RNCmatch}
There is an ${\cal RNC}^2$ algorithm for computing a minimum-weight perfect
matching on a bipartite graph $G$ with edge weights in the range
$\{0,\ldots,\text{poly}(|V(G)|)\},$ assuming that $G$ has at least one
perfect matching.
\end{theorem}
The probability that the algorithm in Theorem~\ref{RNCmatch} fails to find a
minimum-weight perfect matching (while there exists one) can be reduced to below $1/p(|V|)$ for an
arbitrarily large positive polynomial $p$ by inspecting the proof of
Theorem~\ref{RNCmatch} in \cite{MVV87}.

We are now ready to state our main result for this section.
\comment{ 
Let $S$ be a tree with vertex set $v_1,\ldots,v_k\in V$ and $n_1,\ldots,n_k\ge 0$ with $\sum_{1\le
i\le k} n_i=|V|-k.$
We can construct the tree $T$ minimizing $c(T)$ among those trees containing $S$ and
having exactly $n_i$ edges from vertices in $V\setminus\{v_1,\ldots,v_k\}$ to $v_i,$ $1\le i\le k$
by the following method \cite{CW04}.
Construct a bipartite graph where one partite $A$ contains each $v_i$ duplicated $n_i$ times, and the other partite
$B$ contains $V\setminus\{v_1,\ldots,v_k\}.$
The weight of each edge between $A$ and $B$ is that specified by $G$ and $w.$
Clearly, $$|A|=|B|=|V|-k.$$
A minimum-weight perfect matching between $A$ and $B$ naturally induces a way to connect exactly $n_i$
vertices in $V\setminus\{v_1,\ldots,v_k\}$ to $v_i$ for $1\le i\le k.$
These edges together with the edges in $S$ form the desired $T$ \cite{CW04}.
So by trying over all trees $S$ of $k$ vertices $v_1,\ldots,v_k\in V$ and all $n_1,\ldots,n_k\ge 0$ with $\sum_{1\le
i\le k} n_i=|V|-k$ and keeping the resulting tree $T$ with minimum routing cost, Theorem~\ref{metricPTAS}
guarantees a $1+\epsilon$ approximation for MRCT.
Fortunately, the minimum-weight perfect matching problem can also be solved by an
${\cal RNC}^2$ algorithm if
the edge weights are bounded by $\text{poly}(|V|)$ and there is a unique minimum-weight perfect matching \cite{Pap94CC}.
This and Theorem~\ref{isolation2} implies the following.
} 
\begin{theorem} \label{metricparallel}
Let $\epsilon>0$ be a constant.
There is an ${\cal RNC}^2$ algorithm METRIC-PARALLEL that, given
a metric graph $G=(V,E)$ with a nonnegative edge-weight function $w:E\to
\{0,\ldots,\text{poly}(|V|)\},$
with high probability computes a spanning tree $T$ of $G$ satisfying
$$c_w(T) \le (1+\epsilon)\, C_w(\text{MRCT}(G)).$$
\end{theorem}
}

\section{The SROCT problem} \label{SROCTsection}
We begin this section with the following definition.
\begin{definition} (\cite{WCT00PROCTSROCT, Wu02}) \label{variantdef}
Let $G=(V,E)$ be a graph with a nonnegative edge-weight function $w:E\to
\mathbb{R}_0^+$ and $r:V\to \mathbb{R}_0^+$ be a requirement function on
vertices.
Let $s_1,s_2\in V$ be two vertices of $G$ and $T$ be a spanning tree of $G.$
The sum-requirement communication (s.r.c.) cost of $T$ is
$$c^{(s)}_w(T)=\sum_{u,v\in V} (r(u)+r(v)) \, d_T(u,v).$$
The {\sc Sum-Requirement Optimal Communication Spanning Tree} (SROCT) problem
is to find a spanning tree $T$ of $G$ with the minimum value of $c^{(s)}_w(T)$
over all spanning trees of $G.$
We use $\text{SROCT}(G)$ to denote an arbitrary spanning
tree of $G$ with the minimum s.r.c. cost.
\comment{
The product-requirement communication (p.r.c.) cost of $T$ is
$$c^{(p)}_w(T)=\sum_{u,v\in V} r(u) \, r(v) \, d_T(u,v).$$
The {\sc Product-Requirement Optimal Communication Spanning Tree} (PROCT) problem
is to find a spanning tree $T$ of $G$ with the minimum value of $c^{(p)}_w(T)$
over all spanning trees of $G.$
We use $\text{PROCT}(G)$ to denote an arbitrary spanning
tree of $G$ with the minimum p.r.c. cost.
}
The two-source routing cost of $T$ with sources $s_1,s_2$ is
$$c^{(2)}_w(T,s_1,s_2)=\sum_{v\in V} \left(d_T(s_1,v)+d_T(s_2,v)\right).$$
The $2$-MRCT problem is to find a spanning tree $T$ of $G$ with the minimum
value of $c^{(2)}_w(T,s_1,s_2)$ over all spanning trees of $G$ (in this
problem $s_1$ and $s_2$ are part of the input).
We use $\text{$2$-MRCT}(G)$ to denote an arbitrary spanning
tree of $G$ with the minimum two-source routing cost when the sources
$s_1,s_2$ are clear from the context.
Let $\lambda\ge 1.$
The weighted two-source routing cost of $T$ with sources $s_1,s_2$ and
weight $\lambda$ is
\begin{eqnarray*}
&&c^{(2)}_w(T,s_1,s_2,\lambda)\\
&=&\sum_{v\in V} \left(\lambda \, d_T(s_1,v)+d_T(s_2,v)\right).
\end{eqnarray*}
The weighted $2$-MRCT problem is to find a spanning tree $T$ of $G$ with the
minimum value of $c^{(2)}_w(T,s_1,s_2,\lambda)$ over all spanning trees of $G$ (in
this problem $s_1,s_2$ and $\lambda$ are part of the input).
We use $\text{W-$2$-MRCT}(G)$ to denote an arbitrary spanning
tree of $G$ with the minimum weighted two-source routing cost when $\lambda$
and the sources $s_1,s_2$ are clear from the context.
\end{definition}
The SROCT, $2$-MRCT and weighted $2$-MRCT problems are all
${\cal NP}$-hard, even on metric graphs \cite{WLBCRT00, WCT00PROCTSROCT,
WCT00PROCTPTAS, Wu02}.

The following theorem gives a
$2$-approximation solution to the SROCT problem.
\begin{theorem} (\cite{WCT00PROCTSROCT}) \label{SROCTapprox}
Let $G=(V,E)$ be a connected graph with a nonnegative edge-weight function $w$ and
a nonnegative vertex-requirement function $r.$
There exists a vertex $x\in V$ such that any
shortest path tree $T$ rooted at $x$ satisfies
$$c^{(s)}_w(T) \le 2 c^{(s)}_w(\text{SROCT}(G)).$$
\end{theorem}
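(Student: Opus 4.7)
The plan is to exhibit the desired vertex $x$ by analyzing an arbitrary optimal SROCT $T^*$, deriving a triangle-inequality upper bound for the cost of any candidate shortest path tree $T_y$, and then selecting $y$ via an averaging argument. First, for any $y\in V$ and any shortest path tree $T_y$ rooted at $y$, the identity $d_{T_y}(u,y)=d_G(u,y)\le d_{T^*}(u,y)$ for all $u$ gives, via the triangle inequality in $T_y$,
\[
d_{T_y}(u,v)\le d_{T_y}(u,y)+d_{T_y}(y,v)\le d_{T^*}(u,y)+d_{T^*}(y,v).
\]
Substituting this into $c^{(s)}_w(T_y)=\sum_{u,v}(r(u)+r(v))\,d_{T_y}(u,v)$ and collecting terms (using the symmetry between the $u$ and $v$ roles) yields, with $R=\sum_{u}r(u)$, the uniform upper bound
\[
c^{(s)}_w(T_y)\le 2|V|\,A(y)+2R\,D(y),
\]
where $A(y)=\sum_{u\in V} r(u)\,d_{T^*}(u,y)$ and $D(y)=\sum_{u\in V} d_{T^*}(u,y)$.

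Next, I would locate a specific $x$ for which the right-hand side is at most $2\,c^{(s)}_w(T^*)$. Two accounting identities are useful: $\sum_y A(y)=\sum_{u}r(u)\,D(u)=c^{(s)}_w(T^*)/2$ and $\sum_y D(y)=\sum_{u}D(u)=c_w(T^*)$, the ordinary (unit-requirement) routing cost of $T^*$. The first identity already forces any minimizer $x$ of $A$ to satisfy $2|V|\,A(x)\le c^{(s)}_w(T^*)$. To simultaneously tame $2R\,D(x)$, my plan is to take $x$ to be the $r$-weighted $1$-median of $T^*$, which enjoys a mass-balance property: removing $x$ from $T^*$ leaves subtrees each with total $r$-weight at most $R/2$. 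Coupling this balance with the telescoping $d_{T^*}(u,x)\le d_{T^*}(u,v)+d_{T^*}(v,x)$, averaged against $r(v)$, should let me bound $R\,D(x)$ in terms of $c^{(s)}_w(T^*)$ together with $|V|\,A(x)$, from which the desired $2$-approximation falls out.

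The main obstacle I expect is precisely this coupling of weighted and unweighted averages. An unweighted averaging of the upper bound over $y\in V$ gives only
\[
\min_y c^{(s)}_w(T_y)\le c^{(s)}_w(T^*)+\frac{2R\,c_w(T^*)}{|V|},
\]
and the second summand cannot in general be absorbed into $c^{(s)}_w(T^*)$ when the requirement function $r$ is concentrated on a few vertices (so that $R\,c_w(T^*)$ can exceed $\tfrac12|V|\,c^{(s)}_w(T^*)$). Closing the $2$-approximation will therefore require either combining two averagings (one uniform, one weighted by $r$) or else using the combinatorial properties of the $r$-weighted median of $T^*$ to argue directly that $R\,D(x)\le c^{(s)}_w(T^*)/2$ at that vertex. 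Once such an $x$ is fixed, the theorem is immediate from the triangle-inequality expansion in the first paragraph.
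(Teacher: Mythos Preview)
This theorem is quoted in the paper as a result of Wu, Chao and Tang; the paper itself gives no proof, so there is nothing to compare your argument to directly.  I will therefore just assess your plan on its own merits.

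Your first paragraph is correct: for any root $y$ and any shortest path tree $T_y$, the triangle inequality through $y$ gives
\[
c^{(s)}_w(T_y)\ \le\ 2\bigl(|V|\,A(y)+R\,D(y)\bigr),
\]
and you correctly diagnose that a uniform average over $y$ is insufficient when the requirements are concentrated.

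The gap is in your proposed closure (b).  The assertion that the $r$-weighted $1$-median $x$ of $T^*$ satisfies $R\,D(x)\le c^{(s)}_w(T^*)/2$ is \emph{false}.  A small counterexample: take the path $v_1-v_2-v_3$ with unit edges, $r(v_1)=100$, $r(v_2)=r(v_3)=1$.  Then the $r$-median is $v_1$, $D(v_1)=3$, $R=102$, so $R\,D(v_1)=306$, while $c^{(s)}_w(T^*)/2=\sum_v A(v)=305$.  More generally, in edge-by-edge form the quantity $R\,D(x)-\sum_v A(v)$ equals $\sum_e w(e)\,R_1(e)\bigl(n_1(e)-n_2(e)\bigr)$ (with side~$1$ the component not containing $x$), and the median property only forces $R_1\le R_2$, not $n_1\le n_2$.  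Your triangle-inequality detour $R\,D(x)\le c^{(s)}_w(T^*)/2+|V|\,A(x)$ is valid but only closes to a factor $3$, not $2$.

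Your alternative (a) is the right idea, and in fact a specific combination works cleanly.  Set $\rho(y)=|V|\,r(y)+R$, so that $|V|\,A(y)+R\,D(y)=\sum_u \rho(u)\,d_{T^*}(u,y)=:g(y)$.  Then
\[
\sum_y \rho(y)\,g(y)\ =\ \sum_{u,y}\rho(u)\rho(y)\,d_{T^*}(u,y)
\ =\ \sum_{e\in E(T^*)} 2\,w(e)\,\rho(T_1)\,\rho(T_2),
\]
and the per-edge comparison with $c^{(s)}_w(T^*)=\sum_e 2\,w(e)\,(n_1R_2+n_2R_1)$ reduces (after substituting $\rho(T_i)=|V|R_i+n_iR$ and expanding) to the inequality
\[
(|V|R_1+n_1R)(|V|R_2+n_2R)\ \le\ 2\,|V|\,R\,(n_1R_2+n_2R_1),
\]
which, upon setting $a=n_1R_1,\ b=n_1R_2,\ c=n_2R_1,\ d=n_2R_2$, is exactly $(b-c)^2\ge 0$.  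Since $\sum_y\rho(y)=2|V|R$, the $\rho$-weighted average of $g$ is at most $c^{(s)}_w(T^*)$, so some $x$ has $g(x)\le c^{(s)}_w(T^*)$ and hence $c^{(s)}_w(T_x)\le 2\,c^{(s)}_w(T^*)$.  This is the missing step in your outline.
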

Theorems~\ref{isolation}--\ref{doublecount}, \ref{SROCTapprox} and
Lemma~\ref{routpair} give the following
parallelized $2$-approximation solution to the SROCT problem.
\begin{theorem} \label{parallelSROCT}
There is an ${\cal RNC}^2$ algorithm PARALLEL-SROCT that, on input a connected graph $G=(V,E)$ with
a nonnegative edge-weight function $w:E\to \{0,\ldots,\text{poly}(|V|)\}$ and a nonnegative vertex-requirement
function $r:V\to \{0,\ldots,\text{poly}(|V|)\},$ outputs a spanning $T$ of $G$ with
$$c^{(s)}_w(T)\le (2+o(1)) \, c^{(s)}_w(\text{SROCT}(G))$$
with high probability.
If PARALLEL-SROCT does not output such a spanning tree, it outputs ``fail.''
\end{theorem}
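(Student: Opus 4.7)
The plan is to mirror the template of Section~\ref{firstresult}, substituting Theorem~\ref{SROCTapprox} for Theorem~\ref{approx} and shortest-path trees for general-star cores. First, I would apply the isolation lemma (Theorem~\ref{isolation}) with a weight set $W$ of the form $\{1,2,\ldots,N\}/N$ for $N$ a sufficiently large polynomial in $|V|$, so that $w^\prime = w + w_r$ renders $G$ strongly min-unique with probability at least $1-1/(2|V|)$. After clearing the common denominator $N$, the edge weights remain polynomially bounded integers, so Theorem~\ref{doublecount} applies to $(G,w^\prime)$. Running FIND-PATH on one arbitrary pair lets the algorithm detect the ``not strongly min-unique'' event and output ``fail''; otherwise it proceeds.

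Second, for each candidate root $x \in V$ in parallel, I would build the shortest-path tree $T_x$ rooted at $x$ under $w^\prime$. For every $v \ne x$, FIND-PATH$(G,w^\prime,x,v)$ unambiguously returns the unique shortest $x$-$v$ path; the union of these paths is a tree because strong min-uniqueness forces every prefix of a shortest path from $x$ to be itself a shortest path from $x$, so the paths interlock consistently. Logging an edge $uv$ in $T_x$ amounts to checking, via FIND-PATH, that $uv$ lies on some path in $\mathrm{SP}_{G,w^\prime}(x,v)$, which is an unambiguous logspace predicate.

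Third, I would evaluate the s.r.c.\ cost of each $T_x$ under the original $w$: by Lemma~\ref{routpair}, for every pair $(u,v)$ we obtain $d_{T_x}(u,v)$ unambiguously in logspace, and summing $(r(u)+r(v))\,d_{T_x}(u,v)$ is a logspace counting task. The algorithm outputs the $T_x$ minimizing this sum. By Theorem~\ref{SROCTapprox} applied to $(G,w^\prime)$, there is some $x^\ast$ with $c^{(s)}_{w^\prime}(T_{x^\ast}) \le 2\,c^{(s)}_{w^\prime}(\mathrm{SROCT}_{w^\prime}(G))$. The overall procedure runs in nondeterministic logspace with a unique accepting branch per root, so via ${\cal UL}\subseteq{\cal NL}\subseteq{\cal NC}^2$ it lies in ${\cal RNC}^2$ when the randomization of $w_r$ is moved to input gates.

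The main obstacle is a cost-translation analog of Lemma~\ref{smallerr} for the s.r.c.\ cost: I must show that $c^{(s)}_{w^\prime}(T) = (1+o(1))\,c^{(s)}_w(T)$ for every spanning tree $T$, so that a $2$-approximation under $w^\prime$ becomes a $(2+o(1))$-approximation under $w$. The bound follows because each $d_T(u,v)$ under $w^\prime$ exceeds the one under $w$ by at most $|V|\cdot\max_e w_r(e) \le |V|/N$, while $c^{(s)}_w(T)$ is at least a constant whenever it is nonzero (nonzero integer weights are $\ge 1$ and $r$ is integer-valued); choosing $N$ a sufficiently large polynomial in $|V|$ and in the polynomial bound on $r$ makes the relative perturbation $o(1)$. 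In the degenerate case $c^{(s)}_w(\mathrm{SROCT}(G))=0$ the output tree also has $w$-cost zero because $w$ is unchanged on zero-weight edges traversed by any shortest-path tree, so the $(2+o(1))$ guarantee holds trivially.
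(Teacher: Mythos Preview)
Your approach matches the paper's almost exactly: perturb via the isolation lemma, use FIND-PATH to certify strong min-uniqueness and to build the shortest-path tree $T_x$ for each root $x$, score each $T_x$ with ROUT-PAIR, and return the best one, with the $2$-factor coming from Theorem~\ref{SROCTapprox} on $(G,w')$ and a Lemma~\ref{smallerr}-style transfer back to $w$. (The paper scores and selects under $w'$ rather than $w$, but either works once the transfer lemma is in place.)

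There is one concrete slip. With $W=\{1,2,\ldots,N\}/N$ the maximum perturbation is $N/N=1$, not $1/N$; your inequality $|V|\cdot\max_e w_r(e)\le |V|/N$ is therefore false as written, and with a perturbation of order~$1$ both the cost-translation step and the degenerate-case claim (that a $w'$-shortest-path tree stays inside the zero-$w$-weight subgraph) break down. The fix is exactly what the paper does: take $W=\{1/|V|^{10},\ldots,|V|^6/|V|^{10}\}$, so that $|W|=|V|^6$ still suffices for Theorem~\ref{isolation} while $\max_e w_r(e)\le 1/|V|^4$. Two further differences worth noting: the paper disposes of the degenerate case up front by testing in logspace whether the zero-weight edges span $G$ and, if so, outputting any spanning tree of that subgraph, which is cleaner than the argument in your last sentence; and for the transfer the paper lower-bounds $c^{(s)}_w(T)$ by $\max_{v\in V} r(v)$ (using that the zero-weight subgraph is disconnected) rather than merely by $1$, which spares you from making $N$ depend on the polynomial bound on~$r$.
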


\comment{
\section{Metric PROCT problem with small weights and requirements}
\label{PROCTsection}
We now turn to the parallelization of the $1.577$-approximation algorithm \cite{WCT00PROCTSROCT}
for the PROCT problem on metric graphs when the edge-weight and
vertex-requirement functions are bounded by a polynomial in the number of
vertices.
For a graph $G=(V,E)$ with a nonnegative edge-weight function
$w$ and $s,t\in V,$ an $s$-$t$ cut \cite{WCT00PROCTSROCT} of $G$ is a partition $(V_1,V_2)$ of $V$
such that $s\in V_1$ and $t\in V_2.$ The cost of the cut $(V_1,V_2)$ is
$\sum_{e\in \{st\in E \mid s\in V_1, t\in V_2\}} w(e).$ A minimum $s$-$t$ cut is one
with the minimum cost.
We will need the following theorem.
\begin{theorem} (\cite{KUW86, MVV87}) \label{mincutRNC}
There is an ${\cal RNC}^2$ algorithm that, on input
a graph $G=(V,E)$ with edge-weight function $w:E\to
\{0,\ldots,\text{poly}(|V|)\}$ and $s,t\in V,$
finds an $s$-$t$ minimum cut with high probability.
\end{theorem}
We also need the following definition.
\begin{definition} (\cite{WCT00PROCTSROCT})
For a graph $G=(V,E)$ with a nonnegative edge-weight function $w,$
a nonnegative vertex-requirement function $r$ and $x,y\in V,$
$H_{x,y}$ is the complete graph with vertex set $V$ and the
nonnegative edge-weight function $h:E(H_{x,y})\to \mathbb{R}_0^+$ defined as follows:
\begin{itemize}
\item [1.] $h(x,y)=2r(x) \, r(y) \, w(x,y).$
\item [2.] $h(x,v)=2r(v) \, (R-r(v)) \, w(y,v) + 2r(v) \, r(x) \, w(x,y),$
and $h(y,v)=2r(v) \, (R-r(v)) \, w(x,v) + 2r(v) \, r(y) \, w(x,y)$
for $v\notin \{x,y\}.$
\item [3.] $h(u,v)=2r(u) \, r(v) \, w(x,y)$ for $u,v\notin \{x,y\},$
\end{itemize}
where $R=\sum_{v\in V} r(v).$
\end{definition}
The following theorem is due to Wu et al. \cite{WCT00PROCTSROCT}.
\begin{theorem} (\cite{WCT00PROCTSROCT}) \label{metricPROCT}
Let $G=(V,E)$ be a metric graph with a nonnegative edge-weight function $w$ and
a nonnegative vertex-requirement function $r.$
There exist $x,y\in V$ such that for any minimum $x$-$y$ cut $(V_1,V_2)$ in
$H_{x,y},$
the spanning tree $T$ of $G$ with vertex set $V$ and edge set
\begin{eqnarray*}
&&\{xv\mid v\in V_1, v\neq x\}\\
&\cup& \{yv \mid v\in V_2,v\neq y\}
\cup \{xy\}
\end{eqnarray*}
satisfies
$$c^{(p)}_w(T)\le 1.577 c^{(p)}_w(\text{PROCT}(G)).$$
\end{theorem}
Theorems~\ref{mincutRNC} and \ref{metricPROCT} yield the following theorem.
\begin{theorem} \label{parallelPROCT}
There is an ${\cal RNC}^2$ algorithm METRIC-PROCT that, on input a metric graph $G=(V,E)$
with a nonnegative edge-weight function $w:E\to \{0,\ldots,\text{poly}(|V|)\}$
and a nonnegative vertex-requirement
function $r:V\to \{0,\ldots,\text{poly}(|V|)\},$
with high probability outputs a spanning tree $T$ of $G$ with
$$c^{(p)}_w(T)\le 1.577 c^{(p)}_w(\text{PROCT}(G)).$$
\end{theorem}
}

\section{Weighted $2$-MRCT problem} \label{twoMRCTsection}
For the weighted $2$-MRCT problem, we can assume without loss of generality
that the two sources $s_1,s_2$ are such that $d_G(s_1,s_2)>0,$ where $G$ is
the input graph. Otherwise, the problem reduces to
finding a shortest path tree rooted at $s_1,$ which was implicitly done in the
proof of Theorem~\ref{parallelSROCT}.
Wu \cite{Wu02} has the following $2$-approximation solution for the weighted
$2$-MRCT problem.
\begin{theorem} (\cite{Wu02}) \label{twoMRCT}
Let $G=(V,E)$ be a connected graph with a nonnegative edge-weight function $w:E\to
\mathbb{R}^+_0,$ two sources $s_1,s_2\in V$ with $d_G(s_1,s_2)>0$ and $\lambda\ge 1.$
Denote $$D_1(v)=(\lambda+1)\, d_G(v,s_1) + d_G(s_1,s_2)$$
and
$$D_2(v)=(\lambda+1)\, d_G(v,s_2) + \lambda \, d_G(s_1,s_2)$$ for $v\in V.$
Let $Z_1^w=\{v\mid D_1(v)\le D_2(v)\}$ and $Z_2^w=V\setminus Z_1^w.$
Let $Q\in \text{SP}_G(s_1,s_2)$ be arbitrary.
Denote
\begin{eqnarray*}
Q=\left(q_0=s_1,\ldots,q_j,q_{j+1},\ldots,s_2\right)
\end{eqnarray*}
where $q_{j+1}$ is the first vertex on $Q$ (in the direction from $s_1$ to
$s_2$) that is not in $Z_1^w$ (it is easy to see that $s_1\in Z_1^w$).
For each $v\in V,$ let $P_{v,s_1} \in \text{SP}_G(v,s_1)$ and
$P_{v,s_2} \in \text{SP}_G(v,s_2)$ be arbitrary.
If $T_1=\bigcup_{v\in {Z_1^w}} P_{v,s_1}$ and $T_2=\bigcup_{v\in {Z_2^w}}
P_{v,s_2}$ are trees, then
$T=T_1\cup T_2\cup q_j q_{j+1}$ is a spanning tree of $G$
and
$$c^{(2)}_w(T)\le 2 c^{(2)}_w(\text{W-$2$-MRCT}(G)).$$
\end{theorem}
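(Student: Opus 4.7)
My plan is a three-stage argument: verify that $T$ is a spanning tree of $G$, upper-bound $c^{(2)}_w(T)$ by $\sum_v \min(D_1(v), D_2(v))$, and then show that this sum is at most $2\, c^{(2)}_w(T^*)$ for any spanning tree $T^*$ of $G$ (in particular for $T^* = \text{W-$2$-MRCT}(G)$).

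The first two stages start with the observation that $s_1 \in Z_1^w$ and $s_2 \in Z_2^w$: using $\lambda \ge 1$ and $d_G(s_1,s_2) > 0$, direct computation gives $D_1(s_1) = d_G(s_1,s_2) < (2\lambda+1)\, d_G(s_1,s_2) = D_2(s_1)$ and $D_2(s_2) = \lambda\, d_G(s_1,s_2) < (\lambda+2)\, d_G(s_1,s_2) = D_1(s_2)$. Under the hypothesis that $T_1$ and $T_2$ are trees, they span $Z_1^w$ and $Z_2^w$ respectively, and by the choice of $j$ the edge $q_j q_{j+1}$ joins $q_j \in Z_1^w$ to $q_{j+1} \in Z_2^w$, so $T$ is a spanning tree. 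For the upper bound, fix $v \in Z_1^w$. Since $T_1$ is a tree containing the shortest path $P_{v,s_1}$, we have $d_T(s_1,v) = d_G(s_1,v)$. The $T$-path from $s_2$ to $v$ can be routed via $q_{j+1}$ (through $P_{q_{j+1},s_2} \subseteq T_2$), across the bridging edge $q_j q_{j+1}$, and then from $q_j$ to $v$ within $T_1$ via $s_1$; the triangle inequality in the tree $T_1$ yields
\[
d_T(s_2,v) \le d_G(s_2,q_{j+1}) + w(q_j q_{j+1}) + d_G(q_j,s_1) + d_G(s_1,v) = d_G(s_1,s_2) + d_G(s_1,v),
\]
where the last equality uses the fact that the prefix and suffix of $Q$ at the edge $q_j q_{j+1}$ are themselves shortest paths. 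Thus $\lambda\, d_T(s_1,v) + d_T(s_2,v) \le D_1(v)$, and a symmetric argument for $v \in Z_2^w$ gives $\lambda\, d_T(s_1,v) + d_T(s_2,v) \le D_2(v)$, so $c^{(2)}_w(T) \le \sum_{v \in Z_1^w} D_1(v) + \sum_{v \in Z_2^w} D_2(v) = \sum_v \min(D_1(v), D_2(v))$.

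For the third stage, since $d_{T^*}(s_i,v) \ge d_G(s_i,v)$, it suffices to establish the pointwise inequality $\min(D_1(v), D_2(v)) \le 2\lambda\, d_G(s_1,v) + 2\, d_G(s_2,v)$ for every $v \in V$. Invoking the metric triangle inequality $d_G(s_1,s_2) \le d_G(s_1,v) + d_G(s_2,v)$, a short calculation shows that $(2-\lambda)\, d_G(s_1,v) \le d_G(s_2,v)$ is sufficient for $D_1(v) \le 2\lambda\, d_G(s_1,v) + 2\, d_G(s_2,v)$, and that $(2\lambda-1)\, d_G(s_2,v) \le \lambda\, d_G(s_1,v)$ is sufficient for $D_2(v) \le 2\lambda\, d_G(s_1,v) + 2\, d_G(s_2,v)$. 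The algebraic identity $(2-\lambda)(2\lambda-1) - \lambda = -2(\lambda-1)^2 \le 0$ implies $2-\lambda \le \lambda/(2\lambda-1)$ for all $\lambda \ge 1$, so these two sufficient conditions dovetail: whenever the first fails, the second must hold, and at least one of $D_1(v), D_2(v)$ is bounded by $2\lambda\, d_G(s_1,v) + 2\, d_G(s_2,v)$. Summing over $v$ completes the proof.

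The main obstacle is the algebra in the third stage---specifically, noticing the identity $(2-\lambda)(2\lambda-1) \le \lambda$ that makes the two case conditions cover all possible ratios $d_G(s_2,v)/d_G(s_1,v)$ without gap. Everything else---the spanning-tree verification and the routing bound in the second stage---is essentially combinatorial bookkeeping.
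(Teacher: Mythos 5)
The paper does not actually prove this theorem---it is quoted from Wu \cite{Wu02}---so there is no in-paper proof to compare against; judging your argument on its own merits, the ratio analysis is correct and is the standard route. For $v\in Z_1^w$ you correctly get $\lambda\, d_T(s_1,v)+d_T(s_2,v)\le D_1(v)$ by routing from $s_2$ through $q_{j+1}$, the bridge edge, and $s_1$, using that the prefix and suffix of $Q$ at $q_jq_{j+1}$ are themselves shortest paths; symmetrically the bound is $D_2(v)$ on $Z_2^w$, so $c^{(2)}_w(T)\le\sum_v\min(D_1(v),D_2(v))$. The pointwise comparison $\min(D_1(v),D_2(v))\le 2\lambda\, d_G(s_1,v)+2\, d_G(s_2,v)$ via the two sufficient conditions $(2-\lambda)\,d_G(s_1,v)\le d_G(s_2,v)$ and $(2\lambda-1)\,d_G(s_2,v)\le\lambda\, d_G(s_1,v)$, dovetailed by $(2-\lambda)(2\lambda-1)-\lambda=-2(\lambda-1)^2\le 0$, is a complete case analysis, and together with $d_{T^*}(s_i,v)\ge d_G(s_i,v)$ it yields the factor $2$.

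The one genuine gap is in the spanning-tree verification: you assert that $T_1$ and $T_2$ ``span $Z_1^w$ and $Z_2^w$ respectively,'' but the hypothesis that each is a tree does not by itself give $V(T_1)=Z_1^w$ and $V(T_2)=Z_2^w$. A priori a chosen shortest path $P_{v,s_1}$ with $v\in Z_1^w$ could pass through a vertex of $Z_2^w$; then $V(T_1)\cap V(T_2)\neq\emptyset$, $T_1\cup T_2$ would already be connected, and adding $q_jq_{j+1}$ would create a cycle, so neither the ``spanning tree'' conclusion nor your later use of tree distances such as $d_T(s_1,v)=d_G(s_1,v)$ would be justified. The missing (short) argument: if $u$ lies on a shortest $v$-$s_1$ path, then $d_G(u,s_1)=d_G(v,s_1)-d_G(v,u)$ while $d_G(u,s_2)\ge d_G(v,s_2)-d_G(v,u)$, so
\begin{equation*}
D_1(u)-D_2(u)=(\lambda+1)\bigl(d_G(u,s_1)-d_G(u,s_2)\bigr)+(1-\lambda)\,d_G(s_1,s_2)\le D_1(v)-D_2(v)\le 0,
\end{equation*}
hence $u\in Z_1^w$; symmetrically every vertex of $P_{v,s_2}$ with $v\in Z_2^w$ stays in $Z_2^w$ (the strict inequality defining $Z_2^w$ is preserved). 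This gives $V(T_1)=Z_1^w$, $V(T_2)=Z_2^w$, vertex-disjointness, the edge count $|Z_1^w|-1+|Z_2^w|-1+1=|V|-1$, and hence the spanning-tree claim; with that paragraph inserted your proof is complete.
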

Theorems~\ref{isolation}--\ref{doublecount} and \ref{twoMRCT} and
Lemma~\ref{routpair}
yield the following theorem.
\begin{theorem} \label{weightedtwoMRCT}
There is an ${\cal RNC}^2$ algorithm $\text{WEIGHTED-$2$-MRCT}$ that,
on input a graph $G=(V,E)$ with a nonnegative
edge-weight function $w:E\to \{0,\ldots,\text{poly}(|V|)\},$ $s_1,s_2\in V$ and $\lambda\ge 1,$ with high
probability outputs a spanning tree $T$ with
$$c^{(2)}_w(T)\le (2+o(1))\, c^{(2)}_w(\text{W-$2$-MRCT}(G)).$$
If WEIGHTED-$2$-MRCT does not output such a spanning tree, it outputs
``fail.''
\end{theorem}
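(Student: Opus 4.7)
The plan is to follow the template established by Theorems~\ref{parallel} and \ref{parallelSROCT}: apply the isolation lemma (Theorem~\ref{isolation}) to a small random perturbation of the edge weights so that, with high probability, the graph becomes strongly min-unique under the perturbed weight $w' = w + w_r$; execute Wu's $2$-approximation (Theorem~\ref{twoMRCT}) by replacing every shortest-path lookup with a call to FIND-PATH (Theorem~\ref{doublecount}) in unambiguous logarithmic space, and hence in parallel; and finally translate the resulting guarantee from the perturbed weights back to the original weights via an analogue of Lemma~\ref{smallerr}.

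Concretely, WEIGHTED-$2$-MRCT first scales $w$ and adds independent uniform random integer perturbations $w_r(e)$ drawn from a set $W$ of size at least $|V|^6$, so that $w' = w + w_r$ is still polynomially bounded and every $w_r(e)$ is at most $1/|V|^4$ in the scaled units. Theorem~\ref{isolation} then guarantees that $G$ is strongly min-unique under $w'$ with probability at least $1 - 1/(2|V|)$. Conditioned on this event, for each $v \in V$ in parallel the algorithm invokes FIND-PATH to obtain the unique shortest paths $P_{v,s_1}, P_{v,s_2}$ and the distances $d_G(v,s_1), d_G(v,s_2)$ under $w'$, from which it computes $D_1(v), D_2(v)$ and hence the partition $Z_1^{w'}, Z_2^{w'}$. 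Another parallel FIND-PATH invocation outputs the unique $Q \in \text{SP}_G(s_1,s_2)$, from which $q_j, q_{j+1}$ are read off. The algorithm then assembles $T_1 = \bigcup_{v \in Z_1^{w'}} P_{v,s_1}$ and $T_2 = \bigcup_{v \in Z_2^{w'}} P_{v,s_2}$: uniqueness of shortest paths under $w'$ forces any two of these paths that share a vertex to agree from that vertex onward to $s_1$ (respectively $s_2$), so $T_1$ and $T_2$ are automatically shortest-path subtrees. Finally it forms $T = T_1 \cup T_2 \cup \{q_j q_{j+1}\}$ and uses Lemma~\ref{routpair} in parallel to test that $T$ is a spanning tree and to compute $c^{(2)}_w(T)$; if any FIND-PATH call reports ``not strongly min-unique'' or if $T$ fails the spanning-tree check, the algorithm outputs ``fail'' unambiguously.

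The main obstacle is certifying the $(2+o(1))$ ratio with respect to the \emph{original} weights $w$. Theorem~\ref{twoMRCT} applied to $w'$ yields $c^{(2)}_{w'}(T) \le 2\, c^{(2)}_{w'}(\text{W-$2$-MRCT}(G,w'))$. Since $w' \ge w$ pointwise, $c^{(2)}_{w'}(T) \ge c^{(2)}_w(T)$; in the reverse direction, the total weight perturbation incurred by any spanning tree is at most $(\lambda+1)\cdot |V| \cdot (|V|-1)/|V|^4 = O((\lambda+1)/|V|^2)$, whereas $c^{(2)}_w(\text{W-$2$-MRCT}(G,w)) \ge \lambda+1$ because $d_G(s_1,s_2) \ge 1$. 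An analogue of Lemma~\ref{smallerr} for the weighted two-source routing cost then converts the $2$-approximation under $w'$ into a $(2 + 1/|V|)$-approximation under $w$, that is, a $(2+o(1))$-approximation. Since every subroutine runs in unambiguous logarithmic space with polynomially many parallel invocations, the whole algorithm lies in ${\cal RNC}^2$ by ${\cal UL} \subseteq {\cal NL} \subseteq {\cal NC}^2$.
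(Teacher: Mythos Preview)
Your proposal is correct and follows essentially the same approach as the paper's own proof: perturb the weights via the isolation lemma, run Wu's construction from Theorem~\ref{twoMRCT} with all shortest-path lookups replaced by FIND-PATH calls (relying on strong min-uniqueness to guarantee that $T_1$ and $T_2$ are trees), and then transfer the $2$-approximation under $w'$ back to a $(2+o(1))$-approximation under $w$ using the lower bound $c^{(2)}_w(\text{W-$2$-MRCT}(G))\ge\lambda+1$ against the perturbation contribution $c^{(2)}_{w_r}(T)=O((\lambda+1)/|V|^2)$. The paper additionally handles separately the degenerate cases where $w\equiv 0$ or the zero-weight edges span $G$, but otherwise your argument matches it step for step.
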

We make the following concluding remark.
All our algorithms are shown to be ${\cal RNC}^2$-computable by showing
that they run in unambiguous logarithmic space and
succeed in giving an approximate solution when the random input specifies
an edge-weight function $w_r$ such that $G$ is strongly min-unique with
respect to $w+w_r.$
By a method similar to that in \cite{RA00}, we can also turn the random
weight assignment into polynomially long advices.
This is summarized below.
\begin{cor}
Let $\epsilon>0$ be a constant.
There are ${\cal UL}/\text{poly}$ algorithms for
$(4/3+\epsilon)$-approximating the MRCT problem, $(2+o(1))$-approximating
the SROCT problem and $(2+o(1))$-approximating the weighted $2$-MRCT
problem, where the respective edge-weight and vertex-requirement functions
are given in unary.
\end{cor}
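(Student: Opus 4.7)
The plan is to derandomize the three ${\cal RNC}^2$ algorithms of Theorems~\ref{parallel}, \ref{parallelSROCT} and \ref{weightedtwoMRCT} by replacing their random perturbation $w_r$ with polynomial-length non-uniform advice, following the standard approach of Reinhardt and Allender~\cite{RA00}. In each of the three algorithms, the role of the randomness is confined to producing an edge-weight perturbation $w_r$ so that $G$ becomes strongly min-unique with respect to $w+w_r$; once this holds, Lemmas~\ref{add}--\ref{routpair} (together with their counterparts inside the proofs of Theorems~\ref{parallelSROCT} and \ref{weightedtwoMRCT}) provide an unambiguous log-space procedure that outputs the desired approximation. So it suffices to exhibit, for each input length $n$, a single deterministic $w_r$ that isolates every input of that size.

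First I would fix an input length $n$ and let $\mathcal{I}_n$ be the set of possible inputs (the graph $G$, the unary edge-weight function, the unary requirements, and the source vertices and $\lambda$ where applicable). Since every numerical input is unary-bounded by $\text{poly}(|V|)$, we have $|\mathcal{I}_n|\le 2^{\text{poly}(n)}$. By Theorem~\ref{isolation}, for each fixed input in $\mathcal{I}_n$ a uniformly random $w_r\in\{0,\ldots,|W|-1\}^{E}$ fails to make $G$ strongly min-unique with respect to $w+w_r$ with probability at most $|V|^5/(2|W|)$. Taking $|W|=|V|^5\cdot 2^{\text{poly}(n)+1}$ and applying a union bound over $\mathcal{I}_n$ leaves strictly positive probability that a single $w_r$ isolates every input of size $n$ simultaneously. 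Such a $w_r$ therefore exists and can be written down in $|E|\cdot\log|W|=\text{poly}(n)$ bits, which we take as the advice string $\alpha_n$.

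Given $\alpha_n$, the UL/poly algorithm reads $w_r$ from $\alpha_n$, scales it so that each coordinate is at most $1/|V|^4$ as required by Lemma~\ref{smallerr}, and then runs the unambiguous log-space subroutine underlying the relevant theorem on $(G,w+w_r)$. By the choice of $\alpha_n$, $G$ is strongly min-unique with respect to $w+w_r$, so the subroutine outputs a spanning tree unambiguously; Lemma~\ref{smallerr} and its natural analogues for the $c^{(s)}_w$ and $c^{(2)}_w$ cost functions (used implicitly in the proofs of Theorems~\ref{parallelSROCT} and \ref{weightedtwoMRCT}) guarantee that the cost measured with respect to $w$ exceeds the cost with respect to $w+w_r$ by at most a $(1+o(1))$ factor, so the $(4/3+\epsilon)$ and $(2+o(1))$ approximation ratios are preserved.

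The main obstacle I expect is purely bookkeeping: the algorithm must perform arithmetic on entries of $w_r$ in logarithmic work space even though each entry has $\text{poly}(n)$ bits. This is handled by observing that the advice is polynomial in the input length, so indexing and comparing individual entries of $\alpha_n$ fits in $O(\log n)$ workspace, and the scaled rational perturbations are manipulated symbolically as pairs of polynomial-bit integers whenever FIND-PATH compares path weights. Once this is checked, composing the isolation advice with the unambiguous log-space machines of Sections~\ref{firstresult}--\ref{twoMRCTsection} yields UL/poly algorithms for all three approximation problems.
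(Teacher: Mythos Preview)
Your approach has a genuine gap that is not mere bookkeeping. By taking $|W|=|V|^5\cdot 2^{\text{poly}(n)+1}$ you force each perturbation $w_r(e)$ to carry $\text{poly}(n)$ bits, and then the double-counting machine FIND-PATH of Theorem~\ref{doublecount} can no longer run in logarithmic work space: it must maintain the quantity $\Sigma_k=\sum_{v\in S_{k,s}} (w+w_r)(P^{(k)}_{s,v})$ across the stages $k=0,1,\ldots,|V|-1$, and with $\text{poly}(n)$-bit edge weights this sum itself requires $\text{poly}(n)$ bits of work tape. Indexing into the advice is fine, but \emph{storing} $\Sigma_k$ is not, and the inductive-counting argument collapses without it. The hypothesis $w:E\to\{0,\ldots,\text{poly}(|V|)\}$ in Theorem~\ref{doublecount} is there for exactly this reason, so ``manipulating the rationals symbolically'' does not rescue the log-space bound.

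The method the paper invokes from \cite{RA00} avoids this by taking as advice a \emph{sequence} $w_r^{(1)},\ldots,w_r^{(m)}$ of $m=\text{poly}(n)$ independent perturbations, each with entries bounded by $\text{poly}(|V|)$. For a fixed input, Theorem~\ref{isolation} gives failure probability below $1/2$ for each $w_r^{(i)}$, so all $m$ fail with probability at most $2^{-m}$, which is small enough to union-bound over the $2^{\text{poly}(n)}$ inputs of length $n$; hence some such sequence works for every input simultaneously and has total length $\text{poly}(n)$. The ${\cal UL}/\text{poly}$ machine tries $w_r^{(1)},w_r^{(2)},\ldots$ in order, using item~1 of Theorem~\ref{doublecount} to unambiguously detect strong min-uniqueness and stop at the first good one, after which it proceeds exactly as in your final paragraph. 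Because every $w_r^{(i)}$ is polynomially bounded, all the log-space subroutines of Sections~\ref{firstresult}--\ref{twoMRCTsection} apply unchanged, and Lemma~\ref{smallerr} (and its analogues) still controls the approximation ratio.
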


\section{Conclusion} \label{conclusion}
We have given parallelized approximation algorithms for the
minimum routing
cost spanning tree problem and some of its variants.
Our results show that, by exhibiting multiple processors, we can compute
approximate solutions to the considered problems in
parallel poly-logarithmic time.
We hope this will shed light on the many areas in which
the considered problems are concerned, for example
network design \cite{Hu74, JLK78} and multiple sequences
alignment in computational biology \cite{FD87, Pev92, Gus93, BLP94, WLBCRT00}.


\comment{ 
\subsection{Weighted $2$-MRCT on metric graphs}
The following theorem is due to Wu \cite{Wu02}.
\begin{theorem} (\cite{Wu02}) \label{metrictwoMRCT}
Let $\epsilon>0$ be a constant and $k=\lceil2/\epsilon-1\rceil.$
Let $G=(V,E)$ be a metric graph with a nonnegative edge-weight function $w$ and
$\lambda\ge 1.$
There is a sequence $(x_1,\ldots,x_k)\in V^k$ with the following property.
Consider the path $Q=(s_1,x_1,\ldots,x_k,s_2).$
For each $v\in V\setminus V(Q),$ let $i(v)$ be the smallest value of $i$ such that
$$(\lambda+1)\, w(vm_i) + \lambda \, d_Q(m_i,s_1) + d_Q(m_i,s_2)$$ is
minimized.
Then $T=Q\cup \bigcup_{v\in V\setminus V(Q)} m_{i(v)}v$ is a spanning tree of
$G$ satisfying
$$c^{(2)}_w(T)\le (1+\epsilon/2) \, c^{(2)}_w(\text{$2$-MRCT}(G)).$$
\end{theorem}
Theorems~\ref{isola}
} 

\section*{Acknowledgments}
The authors are grateful to Wen-Hui Chen for his helpful comments and
suggestions.

\appendix
\section*{Appendix}
\begin{proof} [Proof of Theorem~\ref{isolation}.]
The theorem is clearly true for $|V|\le 2.$
If $G$ is not strongly min-unique with respect to $w+w_r$ and
$|V|\ge 3,$ we have seen that there exist $0\le k\le |V|-1, s,t\in V$
such that
$\left|\text{SP}^{(k)}_{s,t}\right|\ge 2$ where the edge weights are given with respect to
$w+w_r.$
This implies the existence of an edge $e\in E$ such that at least one
path in $\text{SP}^{(k)}_{s,t}$ contains $e,$ and at least one does not.
In this case we say that $(k,s,t)$ blames $e.$
Thus, the probability that $G$ is not strongly min-unique is at most the sum
over $0\le k\le |V|-1,$ $s,t\in V$ and $e\in E$ of the probability that
$(k,s,t)$ blames $e.$
\comment{ 
\begin{eqnarray}
&&\Pr\left[G \mbox{ is not strongly min-unique}\right.\nonumber\\
&&\left. \mbox{ w.r.t. } w+w_r\right]\nonumber\\
&\le&\Pr\left[\exists k\in\{0,\ldots,|V|-1\}, s,t\in V \right.\nonumber\\
&&\left. \mbox{ s.t. }\left|\text{SP}^{(k)}_{s,t}\right|\ge 2
\mbox{ w.r.t. } w+w_r\right] \nonumber\\
&\le& \sum_{0\le k\le |V|-1} \sum_{s,t\in V} \Pr\left[\exists e\in E, (k,s,t)
\right.\nonumber\\
&&\left. \mbox{ blames } e\right]\nonumber\\
&\le& \sum_{0\le k\le |V|-1; s,t\in V; e\in E} \Pr\left[(k,s,t)
\mbox{ blames } e\right].\label{eq1}
\end{eqnarray}
} 

For any $k\in\{0,\ldots,|V|-1\}, s,t\in V,e\in E$
and any partial weight assignment of $w_r$ to $E\setminus\{e\},$
there is at most one assignment of $w_r$ to $e$ to make $(k,s,t)$ blame $e.$
This is because if $(k,s,t)$ blames $e$ when $w_r(e)$ is assigned a value
$w_{r,e},$ then increasing or decreasing the value of $w_r(e)$ forces all shortest
paths among those $s$-$t$ paths with at most $k$ edges
to exclude or include $e,$
respectively, making $e$ no longer blamed by $(k,s,t).$
Therefore,
\begin{eqnarray*}
&&\sum_{0\le k\le |V|-1} \sum_{s,t\in V} \sum_{e\in E} \Pr\left[(k,s,t) \mbox{ blames } e\right]\\
&\le& \sum_{0\le k\le |V|-1} \sum_{s,t\in V} \sum_{e\in E} \frac{1}{|W|}\\
&\le& \frac{|V|^5}{2|W|},
\end{eqnarray*}
completing the proof.
\end{proof}


\begin{proof} [Proof of Theorem~\ref{doublecount}.]
Before describing how FIND-PATH works, we describe a few procedures that are useful.
For $k\in\mathbb{N},$ let $c_k=|S_{k,s}|.$
Let $P^{(k)}_{s,v} \in \text{SP}^{(k)}_{s,v}$ be arbitrary and $\Sigma_k=\sum_{v\in
S_{k,s}} w(P^{(k)}_{s,v}).$
Note that the definition of $\Sigma_k$ does not depend on exactly which path in $\text{SP}^{(k)}_{s,v}$ is
chosen as $P^{(k)}_{s,v}.$
It is clear that $c_0=1$ and $\Sigma_0=0.$

We first introduce a nondeterministic logarithmic-space subroutine OUTPUT that outputs $S_{k,s}$
unambiguously, given $G,w,s,c_k,\Sigma_k$ and that
$\left|\text{SP}^{(k)}_{s,v}\right|=1$ for each $v\in S_{k,s}.$
OUTPUT
just needs to nondeterministically guess each vertex $x$
to be in or out of $S_{k,s},$ and if the guess is $x\in S_{k,s}$ then it
outputs $x.$
It verifies each guess of $x\in S_{k,s}$ by
nondeterministically guessing an $s$-$x$ path
with at most $k$ edges and rejecting if it fails.
Along the way OUTPUT counts the number $c_k^\prime$ of vertices verified to be in
$S_{k,s}$ and accumulates
the weights of the guessed $s$-$x$ paths (for $x$ verified to be in $S_{k,s}$)
in a variable $\Sigma_k^\prime.$
It then rejects if $c_k^\prime\neq c_k$ or $\Sigma_k^\prime\neq \Sigma_k.$
Clearly,
guessing any vertex out of $S_{k,s}$ to be in $S_{k,s}$ results in rejection.
For a computation branch of OUTPUT not to reject, it must
have $c_k^\prime$ reach $c_k,$ which requires successfully guessing
an $s$-$x$ path with at most $k$ edges for each $x\in S_{k,s}.$
But to have $\Sigma_k^\prime$ not exceed $\Sigma_k,$
the guessed $s$-$x$ path for each $x\in S_{k,s}$ should be
the unique one in $\text{SP}^{(k)}_{s,x}.$
So $\text{OUTPUT}(G,w,s,c_k,\Sigma_k)$ has a
unique non-rejecting computation branch, on which it
correctly guesses whether each vertex $x$
belongs to $S_{k,s}$ and if so, correctly guesses the unique path in $\text{SP}^{(k)}_{s,x}.$

We now describe a procedure INDUCTIVE that computes $c_{k+1}$ and
$\Sigma_{k+1},$ and determines whether
$\left|\text{SP}^{(k+1)}_{s,x}\right|>1$ for some $x\in V$ unambiguously,
given $c_k$ and $\Sigma_k$ and that $\left|\text{SP}^{(k)}_{s,v}\right|=1$ for
each $v\in S_{k,s}.$
For each vertex $x\in V,$ INDUCTIVE runs
$\text{OUTPUT}(G,w,s,c_k,\Sigma_k)$ to
determine whether $x\in S_{k,s}$ unambiguously and if so, accumulates the
weight $w_x$ of the unique path in $\text{SP}^{(k)}_{s,x}$ as it is guessed
by OUTPUT.
For each $u$ such that $ux$ is an edge, INDUCTIVE also determines whether $u\in
S_{k,s}$ unambiguously and if so, accumulates the
weight $w_u$ of the unique path in $\text{SP}^{(k)}_{s,u}$ as it is guessed.
INDUCTIVE then computes the weight of any shortest $s$-$x$ path with at most $k+1$ edges as
$$\min Q$$
where $$Q=\{w_x\}\cup \{w_u+w(ux)\mid u\in S_{k,s}, ux\in E\}$$ if $x\in S_{k,s}$
and $$Q=\{w_u+w(ux)\mid u\in S_{k,s}, ux\in E\}$$ otherwise.
If $Q=\emptyset,$ INDUCTIVE knows that $x\notin S_{k+1,s}.$
Otherwise $x\in S_{k+1,s}$ and the weight of any path in
$\text{SP}^{(k+1)}_{s,x}$ is known to be $\min Q.$
In case of a tie when computing $\min Q,$
INDUCTIVE knows that $\left|\text{SP}^{(k+1)}_{s,x}\right|>1.$
Doing the above for all $x\in V$ allows INDUCTIVE to compute
$c_{k+1}=|S_{k+1,s}|$ and $\Sigma_{k+1}$
and determine whether
$\left|\text{SP}^{(k+1)}_{s,x}\right|>1$ for some $x\in V$ unambiguously.
During the computation of $\min Q,$
INDUCTIVE does not store the set $Q.$
Instead, INDUCTIVE computes the elements of $Q$ one by one and
stores the smallest element in $Q$ that has been computed so far,
as well as a flag indicating whether $\min Q$ is achieved by
two elements at any time.

We are now ready to describe how FIND-PATH works. Assume $|V|\ge 3.$
FIND-PATH starts with $c_0, \Sigma_0$ and repeatedly simulates INDUCTIVE
until it computes $c_{|V|},\Sigma_{|V|}$ or until it determines that
$\left|\text{SP}^{(k)}_{s,x}\right|>1$ for some $0\le k\le |V|-1$
and $x\in V.$
Doing the above with each other vertex $s^\prime\in V$ replacing the role of $s$
guarantees that if $G$ is not
strongly min-unique, then
we must find $\left|\text{SP}^{(k)}_{s^\prime,x}\right|>1$ for some $0\le
k\le |V|-1$ and $s^\prime,x\in V.$
Instead, if $G$ is strongly min-unique
then FIND-PATH will compute all the way from $c_0,\Sigma_0$ to
$c_{|V|},\Sigma_{|V|}.$
It then runs $\text{OUTPUT}(G,w,s,c_{|V|-1},\Sigma_{|V|-1}).$
As we have seen, $\text{OUTPUT}(G,w,s,c_{|V|-1},\Sigma_{|V|-1})$ has a unique
non-rejecting computation branch, on which the unique shortest $s$-$t$ path $P\in
\text{SP}^{(|V|-1)}_{s,t}$ is correctly guessed by OUTPUT.
The weight $w(P)$ is accumulated along the way.
The strong min-uniqueness of $G$ guarantees that $P$ is also the unique path in $\text{SP}_G(s,t).$
\end{proof}


\begin{proof} [Proof of Lemma~\ref{add}.]
For each $x\in V(R),$ ADD-PATH runs $\text{FIND-PATH}(G,w,x,v)$ to unambiguously
generate the unique shortest path $P_{x,v}\in\text{SP}_G(x,v)$ and
its weight $w(P_{x,v}).$
In this way, ADD-PATH could compute $\min_{x\in V(R)} w(P_{x,v})$ as well as
$\text{closest}(v,R).$
Then ADD-PATH simulates
$\text{FIND-PATH}\left(G,w,v,\text{closest}(v,R)\right)$ to output the unique
path $P$ in $\text{SP}_G\left(v,\text{closest}(v,R)\right)$ unambiguously.
\end{proof}


\begin{proof} [Proof of Lemma~\ref{core}.]
Clearly, CORE could output $R_{1,S}$ unambiguously.
Let $2\le j\le k.$
To output $R_{j,S}=R_{j-1,S}\cup P_{j,S}$ unambiguously, CORE recursively outputs $R_{j-1,S}$ and
then runs
$\text{ADD-PATH}(G,w,R_{j-1,S},v_j)$ to unambiguously output $P_{j,S}.$
There is an additional complication that CORE does not store $R_{j-1,S}$
before calling ADD-PATH.
Instead, whenever ADD-PATH wants to read any bit encoding $R_{j-1,S},$
CORE recursively outputs $R_{j-1,S}$ unambiguously on the fly to support the
required bit.
Each level of recursion uses up logarithmic space and the depth of recursion
is at most $r+4,$ a constant.
The space requirement is therefore logarithmic.

\end{proof}


\begin{proof} [Proof of Lemma~\ref{key}.]
%
STAR begins by running $\text{CORE}(G,w,S)$ to output $C_S$ unambiguously.
For any $u\in V,$ STAR runs $\text{CORE}(G,w,S)$ to unambiguously
determine whether $u\in V(C_S).$
If $u\notin V(C_S),$ STAR needs to output the unique path in
$\text{SP}_G\left(u,\text{closest}(u,C_S)\right).$
For this purpose, it computes $\text{closest}(u,C_S)$ as follows.
For each $v\in V,$
STAR tests if $v\in V(C_S),$ again by running $\text{CORE}(G,w,S).$
If $v\notin V(C_S),$ STAR goes on with the next $v\in V.$
Otherwise, STAR invokes $\text{FIND-PATH}(G,w,u,v)$ to generate the unique
path $P_{u,v} \in \text{SP}_G(u,v)$ and its weight $w(P_{u,v})$
unambiguously.
STAR
records the $v\in V(C_S)$ that has generated the smallest value of $w(P_{u,v})$ so
far, favoring lexicographically smaller values of $v$ in case of a tie.
In the end, the recorded $v\in V$ must be $\text{closest}(u,C_S)$
by the definition of $\text{closest}(u,C_S).$
At this time STAR just invokes
$\text{FIND-PATH}\left(G,w,u,\text{closest}(u,C_S)\right)$ to
output the unique path in $\text{SP}_G\left(u,\text{closest}(u,C_S)\right)$ unambiguously.
Doing the above for all $u\in V$ does the job.
\end{proof}


\begin{proof} [Proof of Lemma~\ref{routpair}.]
If $s=t$ the task is trivial. We assume otherwise.
ROUT-PAIR nondeterministically guesses a path $P$ that does not enter a
vertex immediately after it has left that vertex.
If $P$ is an $s$-$t$ path, then ROUT-PAIR accepts, otherwise it rejects.
The simple $s$-$t$ path in $T$ is
the only $s$-$t$ path that does not enter a vertex immediately after leaving
it.
Its weight $w(P^*)$ can be accumulated as it is guessed.
\end{proof}


\begin{proof} [Proof of Lemma~\ref{unambiguousapprox}.]
For each sequence $S$ of at most $r+4$ vertices in $V,$
Lemma~\ref{key} enables us to unambiguously output
$C_S=\text{CORE}(G,w,S)$ and then each unique path $P_u$ in
$\text{SP}_G\left(u,\text{closest}(u,C_S)\right)$ for $u\in V\setminus V(C_S).$
Furthermore, Fact~\ref{substruct} guarantees that
$$T_S=C_S \cup \bigcup_{u\in V\setminus V(C_S)} P_u$$
is a spanning tree of $G$ and
satisfies $$c_w(T_S)< \left(\frac{4}{3}+\frac{8}{9r+12}\right) \cdot
c_w(\text{MRCT}(G))$$
for some $S.$
Thus, we need only compute $c_w(T_S)$ unambiguously for each sequence $S$ of
at most $r+4$ vertices,
and output $T_{S^*}$ unambiguously for the sequence $S^*$ of at most $r+4$
vertices satisfying
$c_w(T_{S^*})=\min_{S\in V^k, 1\le k\le r+4} c_w(T_S).$
By Definition~\ref{routingcost},
$c_w(T_S)$ can be computed unambiguously by running
$\text{ROUT-PAIR}(T_S,w,s,t)$ for all pairs
$s,t \in V$ and summing up the weight of the simple paths as they are
output.
There is a complication that APPROX does not store $T_S$ before calling
ROUT-PAIR.
Instead, when ROUT-PAIR wants to read any bit in the encoding of $T_S,$
APPROX runs $\text{STAR}(G,w,S)$ to generate the required bit unambiguously
on the fly.
This enables us to obtain $S^*$ unambiguously and thus
$T_{S^*}$ unambiguously by running $\text{STAR}(G,w,S^*).$
\end{proof}


\begin{proof} [Proof of Lemma~\ref{smallerr}.]
It is clear that either $c_{w}(T_2)=0$ or $c_{w}(T_2)\ge 1.$
Also, Fact~\ref{loading} implies that
\begin{eqnarray}
c_{w_r}(T_2)\le \frac{1}{2|V|}.\label{smallpart}
\end{eqnarray}

If $c_{w}(T_2)=0,$
Eq.~(\ref{smallpart}) implies that
$$c_{w^\prime}(T_2)=c_{w_r}(T_2)\le \frac{1}{2|V|}$$
and thus $c_{w^\prime}(T_1)<1$ for sufficiently large $|V|$ by
Eq.~(\ref{eqA}).
This implies $c_{w}(T_1)<1$ and thus $c_{w}(T_1)=0,$ establishing
Eq.~(\ref{eqB}).

If $c_{w}(T_2)\ge 1,$ then Eq.~(\ref{eqA})
and (\ref{smallpart}) imply
\begin{eqnarray*}
&&c_{w}(T_1)\le c_{w^\prime}(T_1) \\
&\le& \alpha \left(c_w(T_2)+c_{w_r}(T_2)\right)\\
&\le& \alpha \, (1+\frac{1}{2|V|}) \, c_{w}(T_2).
\end{eqnarray*}
\end{proof}


\begin{proof} [Proof of Theorem~\ref{parallel}.]
We will show that PARALLEL needs only take a $\text{poly}(|V|)$-long random
input and do the rest of the computation in unambiguous logarithmic
space.
The standard proof technique for showing that ${\cal UL} \subseteq {\cal NL} \subseteq
{\cal NC}^2$ \cite{Pap94CC, Sip05} then completes the proof.

PARALLEL tests the connectedness of $G$ by testing each pair of vertices for
connectedness in logarithmic space \cite{Rei05}.

Below we assume that $G$ is connected.
If we assume that this theorem is true when $w$ is not identically zero,
then PARALLEL can also deal with the identically zero case
by using the unit edge-weight function instead.
The output spanning tree would have zero routing cost under the identically
zero edge-weight function, so item $2$ is still satisfied.
Thus, we can assume without loss of generality that $w$ is not identically
zero.
Furthermore, we can normalize $w$ to give $\min_{e\in E,w(e)\neq 0} w(e)\ge 1.$

The random input to
PARALLEL determines an edge-weight function $w_r: E\to \mathbb{R}^+_0$
where for each $e\in E,$ $w_r(e)$ is independently and randomly chosen from
the uniform
distribution over $\{1/{|V|}^{10},\ldots,{|V|}^6/{|V|}^{10}\}.$
Note that $w_r(e)\le 1/{|V|^4}$ for every $e\in E.$
Denote $w^\prime=w+w_r.$
Let $\hat{T}_w$ be an MRCT with respect to $w$ and $\hat{T}_{w^\prime}$ be
that with respect to $w^\prime.$
By Theorem~\ref{isolation}, with probability at least $1-1/{(2|V|)},$ $G$ is
strongly min-unique with respect to $w^\prime.$
PARALLEL runs $\text{FIND-PATH}(G,w^\prime,s,t)$ for an arbitrary pair $s,t\in V$
to unambiguously test if $G$ is strongly min-unique with respect to $w^\prime,$ and outputs ``fail'' if it
is not.
PARALLEL then
runs $\text{APPROX}(G,w^\prime)$ to unambiguously output a tree $T^\prime$ with
\begin{eqnarray}
& &c_{w^\prime}(T^\prime) < \left(\frac{4}{3}+\epsilon/2\right) \cdot
c_{w^\prime}(\hat{T}_{w^\prime})\nonumber \\
&\le& \left(\frac{4}{3}+\epsilon/2\right)
\cdot c_{w^\prime}(\hat{T}_w) \label{approxEq}
\end{eqnarray}
by invoking Lemma~\ref{unambiguousapprox} with a sufficiently large constant $r$ such
that $8/(9r+12)< \epsilon/2.$

We shall prove that
\begin{eqnarray}
c_{w}(T^\prime) \le (\frac{4}{3}+\epsilon) \cdot c_{w}(\hat{T}_w),\label{goal}
\end{eqnarray}
which is true by Lemma~\ref{smallerr} for sufficiently large $|V|.$
\comment{ 
If $c_{w}({\hat{T}})=0,$ then Fact~\ref{loading} gives
$$c_{w^\prime}(\hat{T})=c_{w_r}(\hat{T})< \frac{{|V|}^3}{2} \,
\frac{1}{|V|^4} \le \frac{1}{2|V|},$$
which together with Eq.~(\ref{approxEq}) implies $c_{w^\prime}(T^\prime)<1$ and thus
$c_{w}(T^\prime)<1.$
Our assumption that $\min_{e\in E} w(e)=1$ then implies
$c_{w}(T^\prime)=0,$ establishing Eq.~(\ref{goal}).
Now assume $c_w(\hat{T})\neq 0.$
Since $\min_{e\in E} w(e)=1,$
we have
\begin{eqnarray}
c_w(\hat{T})\ge 1. \label{least}
\end{eqnarray}
By Fact~\ref{loading},
\begin{eqnarray}
\left|c_{w^\prime}(T^\prime)-c_{w}(T^\prime)\right|
=c_{w_r}(T^\prime)<\frac{|V|^3}{2}
\, \frac{1}{|V|^4}=\frac{1}{2|V|},\label{dif1}
\end{eqnarray}
and
\begin{eqnarray}
\left|c_{w^\prime}(\hat{T})-c_{w}(\hat{T})\right|
=c_{w_r}(\hat{T})<\frac{|V|^3}{2}
\, \frac{1}{|V|^4}=o(1).\label{dif2}
\end{eqnarray}
Eq.~(\ref{approxEq}), (\ref{least})--(\ref{dif2}) establish Eq.~(\ref{goal}).
\comment{ 
This gives
\begin{eqnarray}
c_{w^\prime}(T^\prime) < \frac{3}{2} \cdot c_{w^\prime}(T_2) \le \frac{3}{2} \cdot
c_{w^\prime}(T_1).\label{eq1}
\end{eqnarray}
Now Lemma~\ref{randomcost} shows that, with high probability,
\begin{eqnarray}
(1-1/{|V|}) \cdot (1+{|V|}^5)/2 \cdot c_{\boldmath{1}}(T^\prime)\le
c_{w^\prime}(T^\prime)\label{eq2}
\end{eqnarray}
and
\begin{eqnarray}
c_{w^\prime}(T_1) \le (1+1/{|V|}) \cdot (1+{|V|}^5)/2 \cdot c_{\boldmath{1}}(T_1)\label{eq3}
\end{eqnarray}
hold.
Therefore, with high probability,
\begin{eqnarray*}
& & c_{\boldmath{1}}(T^\prime)\\
&\le_{\mbox{Eq.~(\ref{eq2})}}& \cdot \frac{c_{w^\prime}(T^\prime)}{(1-1/{|V|}) \cdot
(1+{|V|}^5)/2}\\
&<_{\mbox{Eq.~(\ref{eq1})}}& \cdot \frac{3/2 \cdot c_{w^\prime}(T_1)}{(1-1/{|V|}) \cdot
(1+{|V|}^5)/2}\\
&\le_{\mbox{Eq.~(\ref{eq3})}}& \frac{3/2 \cdot c_{\boldmath{1}}(T_1) \cdot (1+1/{|V|})}{(1-1/{|V|})}\\
&=& (\frac{3}{2}+o(1)) \cdot c_{\boldmath{1}}(T_1).
\end{eqnarray*}
} 
} 
\end{proof}


\comment{
\begin{proof} [Proof of Theorem~\ref{metricparallel}.]
Let
$k=\lceil{4/\epsilon}-1\rceil.$
Denote by $X$ the set of all $(S,n_1,\ldots,n_{|V(S)|})$
where $S$ is a subtree of $G$ with $|V(S)|\le k,$ and
$n_1,\ldots,n_{|V(S)|}$ are nonnegative and sum to $|V|-|V(S)|.$
For $(S,n_1,\ldots,n_{|V(S)|})\in X,$
the tree $T(S,n_1,\ldots,n_{|V(S)|})$ with the minimum routing cost
among those trees with configuration $(S,n_1,\ldots,n_{|V(S)|})$
can be computed by an ${\cal RNC}^2$ algorithm by
Theorems~\ref{configbest}--\ref{RNCmatch}.
METRIC-PARALLEL outputs the tree $T^*$
with the minimum routing cost over all trees $T(S,n_1,\ldots,n_{|V(S)|})$ with $(S,n_1,\ldots,n_{|V(S)|})\in X$.
Note that Lemma~\ref{routpair} enables the computation of
$c\left(T(S,n_1,\ldots,n_{|V(S)|})\right)$ in unambiguous logarithmic space
given the tree $T(S,n_1,\ldots,n_{|V(S)|}),$ and the logarithmic-space computation can be
turned into an ${\cal NC}^2$ circuit by the standard proof of the fact
${\cal UL}\subseteq {\cal NL}\subseteq {\cal NC}^2$ \cite{Pap94CC, Sip05}.
By Theorem~\ref{metricPTAS},
$$C_w(T)\le (1+\epsilon)\, C_w(\text{MRCT}(G)).$$
\comment{ 
The algorithm tries each subtree $S$ of $k$ vertices $v_1,\ldots,v_k\in V$ and all $n_1,\ldots,n_k\ge 0$ with $\sum_{1\le
i\le k} n_i=|V|-k.$
For each $S,n_1,\ldots,n_k,$ it uses the ${\cal RNC}^2$ algorithm for minimum-weight perfect matchings to output a tree $T$
minimizing $c_{w^\prime}(T)$ among those trees containing $S$ and
having exactly $n_i$ edges from vertices in $V\setminus\{v_1,\ldots,v_k\}$ to $v_i,$ $1\le i\le k.$
Given $T$ and $w_r,$
$c_{w^\prime}(T)$ can be computed unambiguously as in Lemma~\ref{routpair}---just nondeterministically guess a $u$-$v$ path that does not enter a node immediately
after leaving it for $u,v\in V$ and accumulating the weights of such paths.
The unambiguous logarithmic-space computation of $c_{w^\prime}(T)$ can also be performed by an
${\cal NC}^2$ circuit,
as the standard proof of the fact ${\cal UL}\subseteq {\cal NL} \subseteq
{\cal NC}^2$ shows.
Note that each $S,n_1,\ldots,n_k$ can be tried in parallel by an
${\cal NC}^2$ circuit.
Hence by Theorem~\ref{matricPTAS} we have an ${\cal RNC}^2$ algorithm computing a tree $T$
with $$C_{w^\prime}(T)\le (1+\epsilon)\,
C_{w^\prime}(\text{MRCT}(G)).$$

The only remaining problem is that a tree satisfying
\begin{eqnarray}
C_{w^\prime}(T)\le (1+\epsilon)\, C_{w^\prime}(\text{MRCT}(G))\label{neargoal}
\end{eqnarray}
may not satisfy
$$C_w(T)\le (1+o(1)) \, (1+\epsilon)\, C_w(\text{MRCT}(G)).$$
But by normalizing the weights, we can assume without loss of generality that the minimum
non-zero weight assigned by $w$ is $1.$
Now if $C_{w^\prime}(\text{MRCT}(G))<1$ then
$C_w(\text{MRCT}(G))=0$ since $C_{w^\prime}=C_{w}+C_{w_r}$
and $C_w(\text{MRCT}(G))\in \mathbb{N}.$
But according to Fact~\ref{loading},
$$C_{w^\prime}(\text{MRCT}(G))=C_{w_r}(\text{MRCT}(G))=\frac{1}{2|V|^4},$$
which together with Eq.~\ref{neargoal} implies $C_{w^\prime}(T)<1$ and thus
$C_w(T)=0.$
If $C_{w^\prime}(\text{MRCT}(G))\ge 1,$ then the facts that
$C_{w_r}(\text{MRCT}(G))=\frac{1}{2|V|^4}$
and $C_{w_r}(T)=\frac{1}{2|V|^4}$ together with Eq.~\ref{neargoal} imply
$$C_w(T)\le (1+o(1)) \, (1+\epsilon)\, C_w(\text{MRCT}(G))$$
as well.
} 
\end{proof}
}


\begin{proof} [Sketch of proof of Theorem~\ref{parallelSROCT}.]
We omit the simple case where $w$ is identically zero and assume without
loss of generality that
$\min_{e\in E, w(e)\neq 0} w(e) \ge 1.$
Let $G_0$ be the subgraph of $G$ formed by the zero-weight edges of $G.$
PARALLEL-SROCT tests whether $G_0$ is a connected spanning subgraph of $G$
in logarithmic space
\cite{Rei05} and if so, outputs a spanning tree of $G_0$ by calling, say,
$\text{PARALLEL}(G_0,0)$ where $0$ denotes the identically zero function.

Below we assume that $G_0$ is disconnected.
The random input to PARALLEL-SROCT determines an edge-weight function
$w_r: E\to \mathbb{R}_0^+$ where for each $e\in E,$ $w_r(e)$ is
independently and randomly chosen from the uniform distribution over
$\{1/{|V|^{10}},\ldots,|V|^6/{|V|^{10}}\}.$
Let $w^\prime=w+w_r.$
Note that $\max_{e\in E} w_r(e)\le 1/|V|^4.$
By Theorem~\ref{isolation}, $G$ is strongly min-unique with respect to $w^\prime$
with high probability.
PARALLEL-SROCT uses FIND-PATH to determine if $G$ is strongly min-unique
with respect to $w^\prime$ and outputs ``fail'' if it is not.
Below we assume that $G$ is strongly min-unique with respect to $w^\prime.$
For each $x\in V,$ a shortest path tree $T_x$ rooted at $x$ can be output
unambiguously by running $\text{FIND-PATH}(G,w^\prime,x,y)$ for each $y\in V.$
The s.r.c. cost of $T_x$ can be computed unambiguously
by running $\text{ROUT-PAIR}(T_x,w^\prime,s,t)$ and
accumulating the weight of the output path multiplied by $r(s)+ r(t)$
for all $s,t\in V.$
During the computation of ROUT-PAIR,
each time any bit encoding $T_x$ is needed, it is generated on the fly.
The spanning tree $T_x$ with the minimum (over $x\in V$) s.r.c. cost with respect
to $w^\prime$ is output.
The final step in establishing the approximation ratio goes by showing that
for every spanning tree $T$ of $G,$
$$c^{(s)}_w(T)\ge \max_{v\in V} r(v)$$
by the disconnectedness of $G_0$ and $\min_{e\in E,w(e)\neq 0} w(e)\ge 1,$ whereas
$$c^{(s)}_{w_r}(T) \le \max_{u,v\in V} (r(u)+r(v)) \, \frac{|V|^3}{2}
\frac{1}{|V|^4}$$ by Fact~\ref{loading}.
The fact that ${\cal UL} \subseteq {\cal NL} \subseteq {\cal NC}^2$
completes the proof.
\end{proof}


\comment{
\begin{proof} [Proof of Theorem~\ref{parallelPROCT}.]
For each pair $x,y\in V,$ a minimum $s$-$t$ cut $(V_1,V_2)$ of
$H_{x,y}$ can be
constructed by Theorem~\ref{mincutRNC}.
The p.r.c. cost of the tree with vertex set $V$
and edge set
\begin{eqnarray*}
&&\{xv\mid v\in V_1, v\neq x\}\\
&\cup& \{yv \mid v\in V_2,v\neq y\}
\cup \{xy\}
\end{eqnarray*}
can be computed in logarithmic space and the one with the minimum cost can
be selected for output.
The approximation ratio is guaranteed by Theorem~\ref{metricPROCT}.
The fact that ${\cal UL} \subseteq {\cal NL} \subseteq {\cal NC}^2$
completes the proof.
\end{proof}
}


\begin{proof} [Sketch of proof of Theorem~\ref{weightedtwoMRCT}.]
We omit the simple case when $w$ is identically zero and assume that
$\min_{e\in E, w(e)\neq 0} w(e)\ge 1.$
The case where the zero-weight edges of $G$ form a connected spanning
subgraph of $G$ is dealt with as in the proof of
Theorem~\ref{parallelSROCT}, so we may assume that it is not the case.
The random input to WEIGHTED-$2$-MRCT determines an edge-weight function
$w_r: E\to \mathbb{R}_0^+$ where for each $e\in E,$ $w_r(e)$ is
independently and randomly chosen from the uniform distribution over
$\{1/{|V|^{10}},\ldots,|V|^6/{|V|^{10}}\}.$
Let $w^\prime=w+w_r.$ Note that $\max_{e\in E} w_r(e)\le 1/{|V|}^4.$
WEIGHTED-$2$-MRCT detects whether $G$ is strongly min-unique with respect to
$w^\prime$ by running FIND-PATH and outputs ``fail'' if it is not, which
occurs with a small probability by Theorem~\ref{isolation}.
Now, assume that $G$ is strongly min-unique with respect to $w^\prime.$
The sets $Z^{w^\prime}_1,Z^{w^\prime}_2$
in Theorem~\ref{twoMRCT} where $d_G(\cdot)$ is measured with
respect to $w^\prime$ are computable in unambiguous logarithmic space by Theorem~\ref{doublecount}.
For each $v\in V,$ let $P^{(w^\prime)}_{v,s_1}\in \text{SP}_G(v,s_1)$ and
$P^{(w^\prime)}_{v,s_2}\in \text{SP}_G(v,s_2)$ be the unique shortest paths
with respect to $w^\prime.$
By Theorem~\ref{doublecount},
$T_1=\bigcup_{v\in Z^{w^\prime}_1} P^{w^\prime}_{v,s_1}$ and $T_2=\bigcup_{v\in
Z^{w^\prime}_2} P^{w^\prime}_{v,s_2}$
are unambiguously computable in logarithmic space and they are
trees by the strong min-uniqueness of $G$ with respect to $w^\prime.$
The unique shortest path
$Q^{w^\prime}=\left(q_0=s_1,\ldots,q_j,q_{j+1},\ldots,s_2\right)\in \text{SP}_G(s_1,s_2)$ with respect
to $w^\prime$ is also unambiguously computable by running
$\text{FIND-PATH}(G,w^\prime,s_1,s_2),$ so is its first vertex $q_{j+1}$ outside of
$Z^{w^\prime}_1.$
Theorem~\ref{twoMRCT} then implies that
a tree $T$ satisfying
$$c^{(2)}_{w^\prime}(T) \le 2
c^{(2)}_{w^\prime}(\text{W-$2$-MRCT}_{w^\prime}(G))$$
can be output in unambiguous logarithmic space.
The final step in establishing the approximation ratio is to show that for every spanning tree $T$ of
$G,$
\begin{eqnarray}
c^{(2)}_w(T,s_1,s_2,\lambda)\ge \lambda+1 \label{oneedge}
\end{eqnarray}
whereas
$$c^{(2)}_{w_r}(T,s_1,s_2,\lambda) \le \lambda \,
\frac{|V|^2}{|V|^4}+\frac{|V|^2}{|V|^4}.$$
The fact that ${\cal UL} \subseteq {\cal NL} \subseteq {\cal NC}^2$
completes the proof.
\end{proof}

\bibliographystyle{amsalpha}
\bibliography{par}
\noindent

\end{document}